\newtheorem{theorem}{Theorem}       
\newtheorem{lemma}{Lemma}
\theoremstyle{remark}
\newtheorem{remark}{Remark}
\theoremstyle{plain}
\newtheorem{proposition}{Proposition}
\crefname{figure}{Fig.}{Figs.}
\Crefname{figure}{Fig.}{Figs.}
\crefname{equation}{Eq.}{Eqs.}
\Crefname{equation}{Eq.}{Eqs.}
\crefname{algorithm}{Alg.}{Algs.}
\Crefname{algorithm}{Alg.}{Algs.}
\crefname{remark}{Remark}{Remarks}
\Crefname{remark}{Remark}{Remarks}
\crefname{proposition}{Proposition}{Propositions}
\Crefname{proposition}{Proposition}{Propositions}
\newcommand{\fedge}{\xrightarrow{\;f\;}}
\newcommand{\bedge}{\xrightarrow{\;b\;}}
\title{Extended Serial Safety Net: A Refined Serializability Criterion for Multiversion Concurrency Control}
\author[1]{Atsushi Kitazawa}
\author[2]{Chihaya Ito}
\author[3]{Yuta Yoshida\thanks{Corresponding author: \texttt{yyoshida\_20@nec.com}}}
\author[2]{Takamitsu Shioi}
\affil[1]{Independent IT Consultant, Tokyo, Japan}
\affil[2]{NEC Solution Innovators, Ltd., Tokyo, Japan}
\affil[3]{NEC Corporation, Tokyo, Japan}
\begin{document}

\maketitle

\begin{abstract}
A long line of concurrency-control (CC) protocols argues correctness via a \emph{single serialization point}
(begin or commit), an assumption that is incompatible with \emph{snapshot isolation} (SI), where read--write
anti-dependencies arise. \emph{Serial Safety Net} (SSN) offers a lightweight commit-time test but is conservative
and effectively anchored on commit time as the sole point.

We present \textbf{ESSN}, a principled generalization of SSN that \emph{relaxes} the exclusion condition to allow
more transactions to commit safely, and we prove that this preserves multiversion serializability (MVSR)
and that it strictly subsumes SSN. ESSN states an MVSG (Multiversion Serialization Graph)-based criterion and introduces 
a \emph{known total order} over transactions (KTO; e.g., begin-ordered or commit-ordered) 
for reasoning about the graph's serializability.

With a single commit-time check under invariant-based semantics, ESSN's exclusion condition preserves monotonicity 
along per-item version chains, and eliminates chain traversal. The protocol is Direct Serialization Graph (DSG)-based 
with commit-time work linear 
in the number of reads and writes, matching SSN's per-version footprint.

We also make mixed workloads explicit by defining a \emph{Long} transaction via strict interval containment of
\emph{Short} transactions, and we evaluate ESSN on reproducible workloads. Under a commit-ordered KTO, using
begin-snapshot reads reduces the long-transaction abort rate by up to $\Delta\!\approx\!0.25$ absolute (about
$50\%$ relative) compared with SSN.
\end{abstract}

\medskip
\noindent\textbf{Keywords:} Concurrency Control, Serializability, Multiversioning, Databases, Transaction Processing, Snapshot Isolation

\section{Introduction}
\label{sec:introduction}

A long line of concurrency-control (CC) protocols has established correctness via a
\emph{single serialization point} (e.g., begin-time or commit-time), assuming that the serialization
order of transactions is aligned with that point---especially in modern high-throughput systems
such as Silo~\cite{tu2013-silo}, Cicada~\cite{lim2017-cicada}, TicToc~\cite{yu2016-tictoc},
and CCaaLF~\cite{pan2025-ccaalf}.
In parallel, many production DBMSs adopt SI because reads observe
a consistent snapshot and avoid blocking writers, enabling high concurrency---especially for
read-heavy workloads~\cite{berenson1995-critique,cahill2009-ssi-tods}.
However, SI is not serializable: it permits read--write anti-dependencies inherent to SI that are not 
captured by the single-serialization-point reasoning prevalent in mainstream CC.

To address SI anomalies caused by those anti-dependencies,
\emph{Serializable Snapshot Isolation} (SSI)~\cite{cahill2009-ssi-tods}
and \emph{Serial Safety Net} (SSN)~\cite{wang2017-ems} were proposed for settings where the 
designated serialization point and the resulting serialization order may diverge.
SSN, in particular, introduces an explicit notion of back-edges and offers a lightweight
runtime test that detects potential MVSG cycles at commit time.
It covers not only SI but also \emph{read committed} (RC), making it a generic safety net.
Yet SSN has two drawbacks for today's diverse MVCC designs:
(i) its exclusion condition is conservative (leading to unnecessary aborts), and
(ii) its reasoning implicitly anchors on commit time, limiting applicability to begin-ordered/timestamped variants.

\paragraph{Our stance.}
We recast serializability as an MVSG-based \emph{criterion} over three declarative components:
(i) a \emph{version function} (VF) selecting the visible version for each read,
(ii) a per-item \emph{version order} (VO) that orders writes, and
(iii) a configurable \emph{known total order} (KTO) over transactions used for reasoning (e.g., begin- or commit-ordered).
This perspective is close in spirit to serializability \emph{checkers} such as COBRA~\cite{tan2020-cobra}, which encode
MVSG constraints, including the choice of VO, as an SMT instance; 
by contrast, we treat KTO as an explicit design parameter and adopt a criterion that 
\emph{aligns VF and VO with the declared KTO}.

Given $(\mathrm{VF},\mathrm{VO},\mathrm{KTO})$, we introduce \textbf{Extended Serial Safety Net (ESSN)} as a two-part framework: 
(i) a criterion requiring VF/VO to align with the declared KTO, and 
(ii) an MVSG-based exclusion test, a sufficient condition for MVSG acyclicity, that we prove sound. 
ESSN is independent of low-level implementation choices and strictly subsumes SSN. 
In the rest of this paper, ``ESSN'' refers to both the criterion and its exclusion test; unless stated otherwise, ``KTO'' denotes the chosen total order.

\paragraph{Example 1 (motivating).}
Consider schedule $M_1$ in Fig.~\ref{fig:schedule-m1}.
Under a commit-ordered KTO and a VO aligned with commits, the MVSG contains
a back-edge $t_4 \xrightarrow{b} t_2$ and a forward anti-dependency $t_3 \xrightarrow{f} t_4$.
SSN aborts $t_4$ via $\pi(t_4)\!\le\!\eta(t_4)$ despite the graph being acyclic (a false positive).
In contrast, ESSN \emph{propagates along forward-edges} to form $\xi(\cdot)$ and
accepts $M_1$; details appear in \S\ref{sec:ssn-limitations}--\ref{sec:ssn-illustrative}.

\medskip
\begin{figure}[t]
  \centering
  \includegraphics[width=0.6\linewidth]{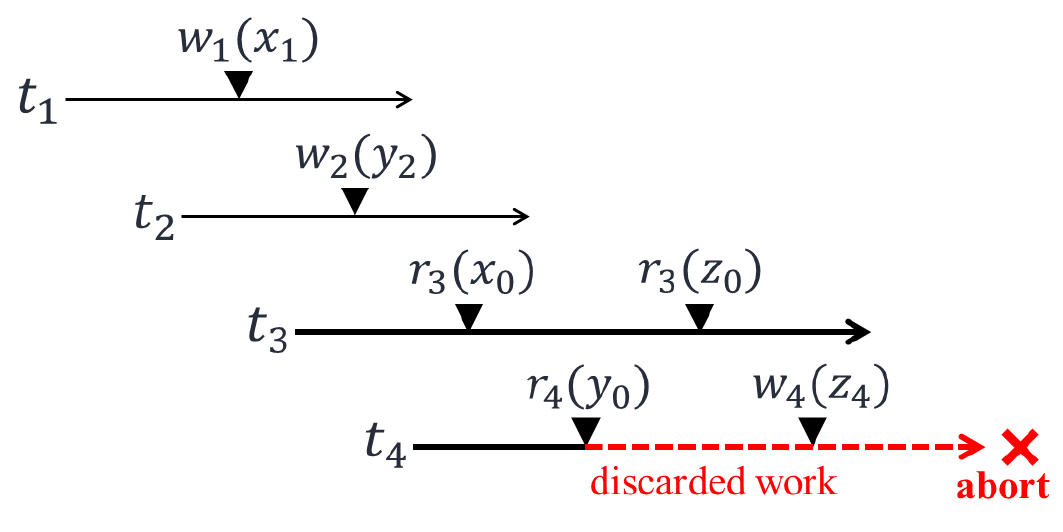}
  \caption{
    Schedule $M_1$ with long-lived $t_3$ and $t_4$.
    SSN aborts $t_4$ due to a back-edge to $t_2$ and a forward-edge from $t_3$,
    whereas ESSN relaxes the exclusion condition via forward propagation and commits $t_4$.
  }
  \label{fig:schedule-m1}
\end{figure}

\paragraph{Invariant-driven algorithm.}
We implement the ESSN exclusion condition as a single commit-time check. Enforcing it yields a structural
invariant on each version chain---\emph{strict monotonicity of $\pi$}---so the check updates only the predecessor
version and never scans the chain. The resulting protocol is DSG-based and previous-edge-only, with no early aborts.
Its commit-time work is linear in \( |R|{+}|W| \) and it matches SSN's per-version footprint while strictly admitting 
more schedules. Our formulation builds on Adya's DSG framework~\cite{adya1999-weakconsistency} and is consistent
with practical serializability work such as PSSI~\cite{revilak2011-pssi}. We also discuss the
\emph{begin-ordered} KTO: the same invariant applies without structural changes, and we analyze
its operational implications---in particular, the potential for \emph{commit-stalls}---and show that
ESSN confines them to local version-chain state.

\paragraph{Long transactions as a first-class target.}
We make \emph{mixed} workloads explicit by defining a \emph{Long} transaction via \emph{strict interval containment}:
a transaction $t$ is \emph{long} if there exists a short transaction $s$ such that 
$b(t) \prec b(s) \prec c(s) \prec c(t)$;

\emph{Long-RW} denotes a Long that performs at least one write--distinct from Long-RO handled
by safe-snapshot techniques~\cite{ports2012-ssi,shioi2024-rss}.
Under strict interval containment, the direction of rw dependencies is fixed by the chosen KTO
(commit- vs.\ begin-ordered), which makes back-edge--averse criteria conservative by construction.
We formalize this workload class and \emph{quantify} the SSN--ESSN acceptance gap experimentally.

Finally, ESSN aligns with system trends:
platforms like \emph{Google Spanner}~\cite{corbett2013-spanner-tocs} and
\emph{Amazon Redshift}~\cite{armenatzoglou2022-redshift} move toward serializability by default,
and \emph{K2}~\cite{song2025-k2} revisits MVTO with tightly synchronized clocks.
As hybrid/multiversion CC strategies proliferate~\cite{hong2025-hdcc,kambayashi2023-shirakami,nemoto2025-oze,Vandevoort2025-mvsplit}, a criterion that treats VF, VO, and KTO as separate components 
while remaining lightweight becomes increasingly valuable.
Recent developments further highlight ESSN's relevance:
system-level schedulers such as MVSchedO~\cite{cheng2024-towards}
have begun to operationalize begin-ordered alignment through per-key 
commit-stalls and stall-bypass mechanisms, while static frameworks such as 
MV-Split~\cite{Vandevoort2025-mvsplit} classify the safety of multiversion 
protocols based on their version-function and version-order alignment.
In parallel, decentralized MVSG controllers such as Oze~\cite{nemoto2025-oze}
employ order-forwarding to admit mixed long/short workloads more aggressively.
ESSN complements these directions by providing a unified runtime criterion 
that bridges static reasoning and dynamic enforcement of serializability.

\paragraph{Contributions.}
\begin{itemize}
  \item \textbf{ESSN: criterion + exclusion test.}
        We (i) factor serializability into $(\mathrm{VF},\mathrm{VO},\mathrm{KTO})$ and state a criterion requiring 
        VF/VO alignment with the declared KTO, and (ii) give an MVSG-based commit-time exclusion test valid for any 
        KTO. We formally prove correctness and that ESSN strictly subsumes SSN.
  \item \textbf{Invariant-driven algorithm with SSN-like cost.}
        With a \emph{single commit-time check}---independent of commit order---ESSN enforces monotonicity along 
        per-item version chains (no chain traversal), yielding a previous-edge-only protocol with no early aborts 
        and $\mathcal{O}(|R|{+}|W|)$ commit-time work; it \emph{strictly subsumes} SSN without increasing per-version 
        state.
  \item \textbf{Long/Long-RW as a first-class target.}
        We formalize Long/Long-RW via interval containment and explain how, under OCC/MVTO, 
        the chosen KTO fixes the direction of rw dependencies, causing conventional criteria to be conservative;
        experiments (commit-ordered KTO) show ESSN admits more schedules than SSN, with clear gains on Long-RW.

\end{itemize}

\section{Preliminaries and Background}

This section introduces the theoretical concepts essential to understanding the Extended Serial Safety Net
(ESSN) protocol. We begin by clarifying the role of the MVSG and VO in determining 
serializability~\cite{bernstein1983-mvcc,weikum2001-tis}.
We then introduce the KTO as a declared total order over transactions and make explicit
the \emph{alignment of VF/VO with KTO}, which governs how dependencies are classified.

\subsection{MVSG and VO: Defining Serializability via Dependencies}

The MVSG is a directed graph that represents the dependency relationships among transactions in a 
multiversion schedule $m$. Two components determine its structure:

1. A \textbf{VF}, which specifies for each read $r_i(x_j)$ the version $x_j$ that is returned. 
   This immediately fixes all \textit{read-from} edges (wr-dependencies) and thereby forms the schedule $m$.  

2. A \textbf{VO} $\ll_v$, a total order over all versions in $m$. 
Given the read-from edges established by the VF, the VO determines the direction of all $ww$ and $rw$ dependencies. 
Together, these two choices fully determine the MVSG$(m, \ll_v)$. 

\paragraph{MVSG construction rules (Dependency Types).}  
Given a multiversion schedule $m$ and a \textbf{VO} $\ll_v$, the MVSG$(m, \ll_v)$ is constructed as follows. 
Each node corresponds to a transaction, and directed edges are added based on the following dependency types:

\begin{itemize}
  \item \textbf{wr-dependency (read-after-write)}: if $r_k(x_j)$ exists in $m$, then add an edge $t_j \rightarrow t_k$.
  \item \textbf{ww-dependency (write-after-write)}: if $r_k(x_j)$ and $w_i(x_i)$ exist in $m$ and $x_i \ll_v x_j$, then add an edge $t_i \rightarrow t_j$.
  \item \textbf{rw-dependency (write-after-read), also known as \textit{anti-dependency}}: if $r_k(x_j)$ and $w_i(x_i)$ exist in $m$ and $x_j \ll_v x_i$, then add an edge $t_k \rightarrow t_i$.
\end{itemize}
Note that the type of an edge can vary with the chosen VO; once the VO is fixed, it determines a single 
MVSG for a given schedule.

\subsection{KTO: Structuring the Serialization Reference}

The KTO, denoted~$\ll_K$, is an explicitly defined total order over transactions. It serves as the reference order for reasoning about dependencies and serialization.

In ESSN, each edge $t_i \rightarrow t_j$ in the MVSG is classified with respect to the KTO:
\begin{itemize}
  \item If $t_i \ll_K t_j$, the edge is a \textbf{forward-edge} , denoted $t_i \xrightarrow{f} t_j$.
  \item If $t_j \ll_K t_i$, the edge is a \textbf{back-edge}, denoted $t_i \xrightarrow{b} t_j$.
\end{itemize}
 
Once the MVSG is formed and edges are annotated as $b$ or $f$
with respect to KTO, cycle detection is purely graph-theoretic; 
it no longer depends on how the MVSG was constructed or which KTO variant was chosen.
This means that an exclusion condition formulated for commit-ordered CC can also be applied to begin-ordered CC.

\paragraph{VF/VO Alignment with KTO.}

We formally define alignment of VF and VO with respect to a given KTO $\ll_K$ as follows.
Let $\mathsf{VF}$ be a version function such that for each read $r_i(x_j)$, $\mathsf{VF}(r_i) = x_j$ and 
$x_j$ was written by transaction $t_j$. Then $\mathsf{VF}$ is \emph{aligned} with $\ll_K$ if:
\[
t_j \ll_K t_i
\]
That is, the writer of the read version precedes the reader in KTO.
Let $\ll_v$ be a VO over versions. Then $\ll_v$ is \emph{aligned} with $\ll_K$ 
if for every pair of versions $x_i \ll_v x_j$ written by $t_i$ and $t_j$, respectively, we have:
\[
t_i \ll_K t_j
\]
Once a KTO and aligned VF/VO are fixed, all wr and ww edges in the MVSG$(m, \ll_v)$ are forward-edges 
with respect to $\ll_K$. Therefore, the only possible source of back-edges is rw-dependencies (anti-dependencies).

\medskip
\noindent\textit{Examples.}
\begin{enumerate}
  \item \textbf{Writes only.}
  \[
    w_0(x_0)\; w_1(x_1)\; c_0\; c_1\ .
  \]
  Even with VO declared, this schedule has no read $r_i(x_j)$, so the MVSG construction rules do not fire:
  no edges appear in the MVSG.

  \item \textbf{Add a read by $t_2$ from $t_1$.}
  \[
    w_0(x_0)\; w_1(x_1)\; c_0\; c_1\; r_2(x_1)\; c_2\ .
  \]
  With a VF aligned with a commit-ordered KTO, we obtain a $\mathrm{wr}$ edge $t_1 \to t_2$, which is \emph{forward}.
  If the VO is also aligned with the commit-ordered KTO, then $x_0 \ll_v x_1$; by the MVSG rules this yields a
  $\mathrm{ww}$ edge $t_0 \to t_1$, also \emph{forward}.

  \item \textbf{Swap $c_0$ and $c_1$; choose $x_1 \ll_v x_0$ (VO aligned with the new commit order).}
  \[
    w_0(x_0)\; w_1(x_1)\; c_1\; c_0\; r_2(x_1)\; c_2\ .
  \]
  Given $x_1 \ll_v x_0$, the MVSG rule selects the \emph{next writer} of the read version $x_1$, namely $t_0$,
  hence an $\mathrm{rw}$ edge $t_2 \to t_0$. Because $c_0 \prec c_2$, this edge is \emph{back} under the
  commit-ordered KTO. Note that for $\mathrm{ww}$, the VO order on versions coincides with the edge orientation
  between their writers (e.g., $x_0 \ll_v x_1 \Rightarrow t_0 \to t_1$). By contrast, for $\mathrm{rw}$, the VO
  determines the \emph{type} (rw) and the \emph{next-writer} target; the orientation is from the \emph{reader}
  to that next writer ($t_2 \to t_0$), which need not match the VO relation between the writers of the read version
  ($t_1$ and $t_0$).

  \item \textbf{Move $c_0$ after $c_2$.}
  \[
    w_0(x_0)\; w_1(x_1)\; c_1\; r_2(x_1)\; c_2\; c_0\ .
  \]
  The same $\mathrm{rw}$ edge $t_2 \to t_0$ persists (type and orientation unchanged), but since $c_2 \prec c_0$
  it is now labeled \emph{forward}.
\end{enumerate}

\noindent
Thus, the VF fixes the $\mathrm{wr}$ \emph{edge type and its orientation}, while the VO fixes the non-$\mathrm{wr}$
(\,$\mathrm{ww}$ and selection/targeting for $\mathrm{rw}$\,) \emph{edge type and orientation}.
The KTO merely labels each already oriented edge as forward or back. In particular, when the KTO is aligned with the
VF and VO, all $\mathrm{wr}$ and $\mathrm{ww}$ edges are forward; for $\mathrm{rw}$, once the VO selects the next writer,
the edge orientation (reader $\to$ next-writer) is independent of the KTO---only its forward/back label depends on 
the KTO.

\section{Serial Safety Net (SSN) and the Motivation for ESSN}
\label{sec:ssn-limitations}

In this section, we introduce the SSN protocol, 
a lightweight serializability checker widely adopted in multiversion concurrency control (MVCC) systems. 
We first restate its exclusion condition and analyze its implications under our criterion, 
then illustrate its limitations through a concrete example that motivates the generalization introduced by ESSN. 
Finally, we show that SSN strictly subsumes Serializable Snapshot Isolation (SSI) in its detection power.

\subsection{SSN: A Low-Overhead Commit-Time Checker}
SSN assumes that each record maintains a totally ordered sequence of versions; every write appends a new version 
to the end of that sequence. This differs slightly from our MVSG formulation but is consistent with our DSG-style 
implementation---more restrictive, yet sound.

SSN further assumes an underlying CC that forbids dirty reads and lost writes (``dirty writes''). 
These are typically enforced by recovery properties such as \emph{avoiding cascading aborts} (ACA) and \emph{strict}. 
Because ACA implies a commit-ordered VF, our alignment criterion then suggests a commit-ordered 
VO and a commit-ordered KTO. This matches SSN's claim that the only back-edges arise from 
\emph{rw-dependencies}. Although SSN does not explicitly specify a VO, under our framework it can be viewed 
as using a VF and VO aligned with a commit-ordered KTO. 
The SSN paper explicitly mentions RC and SI as 
underlying isolation levels; we therefore include RC and SI in our experimental comparison between SSN and ESSN.

For clarity, the recovery properties referenced here satisfy the usual hierarchy:
\emph{strict} $\subset$ \emph{ACA} $\subset$ \emph{recoverable} (proper inclusions).

Once the $\mathrm{MVSG}(m,\ll_v)$ is fixed, SSN defines for each transaction $t$:
\begin{align}
\pi(t) &\coloneqq \min\bigl(\{\,c(u)\mid t \xrightarrow{b*} u\,\}\cup\{c(t)\}\bigr),\\
\eta(t) &\coloneqq \max\bigl(\{\,c(u)\mid u \xrightarrow{f} t\,\}\cup\{-\infty\}\bigr) \label{eq:ssn-condition}
\end{align}
and declares $t$ \emph{unsafe} (to be aborted) if
\begin{equation}
\pi(t)\le \eta(t). \label{eq:ssn-exclusion}
\end{equation}
Here $c(u)$ denotes the commit timestamp of $u$; $\xrightarrow{b*}$ is the transitive closure of back-edges; 
and $\xrightarrow{f}$ denotes a single forward-edge with respect to KTO.

\subsection{Illustrative Example: Over-Abort in a Serializable Schedule}
\label{sec:ssn-illustrative}
Consider the following multiversion schedule $M_1$, in which four transactions access and modify three items.
This example corresponds to the same schedule presented earlier in Fig.~\ref{fig:schedule-m1} of
\S\ref{sec:introduction}. Throughout the paper, we use $t_0$ to denote the initial transaction
that creates the base versions (e.g., $x_0,y_0,z_0$). For brevity $t_0$ does not appear in the schedules.

\begin{equation}
\label{eq:M1}
M_1 = w_1(x_1)\ w_2(y_2)\ r_3(x_0)\ c_1\ r_4(y_0)\ c_2\ r_3(z_0)\ c_3\ w_4(z_4)\ a_4
\end{equation}

Assume the VO $\ll_v$ is based on commit order:
\begin{equation}
x_0 \ll_v x_1,\quad y_0 \ll_v y_2,\quad z_0 \ll_v z_4
\end{equation}

The MVSG corresponding to $M_1$ is shown in Fig.~\ref{fig:mvsg-cycle}. In this graph, transactions are positioned 
from left to right according to their commit-time order (KTO). Rightward edges represent forward-edges, 
and leftward edges indicate back-edges.

\begin{figure}[t]
  \centering
  \includegraphics[width=0.7\linewidth]{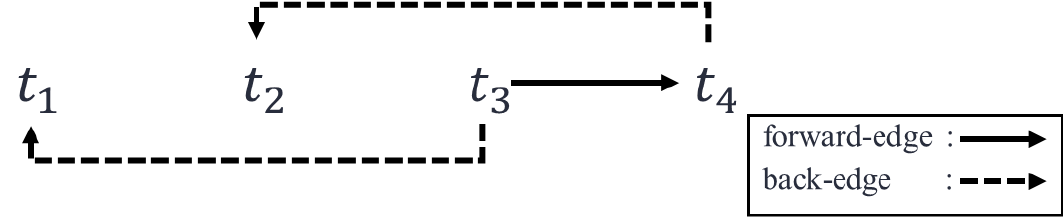}
\caption{
MVSG for schedule $M_1$. Transaction $t_2$ generates a back-edge from $t_4$ through anti-dependencies (rw-edges), 
while $t_3$ creates a forward-edge. SSN aborts $t_4$ due to the exclusion condition, although 
the overall MVSG is acyclic. ESSN, in contrast, allows $t_4$ to commit by propagating metadata through the forward-edge.
}
  \label{fig:mvsg-cycle}
\end{figure}

\paragraph{SSN Evaluation.}
We compute:
\[
\pi(t_4) = c(t_2) ,\quad \eta(t_4) = c(t_3).
\]

If $c(t_2) \leq c(t_3)$, then Eq.~\eqref{eq:ssn-exclusion} holds and $t_4$ is aborted---despite the MVSG being acyclic. 
This highlights a key weakness of SSN: it may reject valid schedules due to a conservative exclusion condition.

\subsection{SSN is more permissive than SSI under SI}\label{subsec:ssn-subsume-ssi}

We position SSN against SSI under SI.
Let $b(t)$ and $c(t)$ denote the begin and commit times of transaction $t$.

\paragraph{Concurrency under SI.}
Two transactions $t_i$ and $t_j$ are \emph{concurrent} 
iff one begins before the other commits and vice versa:
\[
b(t_i) < c(t_j) \;\land\; b(t_j) < c(t_i).
\]

Under SI, only \emph{rw}-antidependencies can occur between concurrent transactions.
In particular, concurrent \emph{wr} and \emph{ww} conflicts are disallowed:
wr is impossible since a reader only sees committed versions, 
and ww is prevented by the first-committer-wins rule.

\begin{lemma}[SSI dangerous structure]\label{lem:ssi-ds}
Fekete et al.~\cite{fekete2005-ssi} showed that every cycle in the
serialization graph under SI contains a sequence of two
\emph{rw}-antidependencies
\[
t_1 \xrightarrow{rw} t_2 \xrightarrow{rw} t_3,
\]
where $t_3$ commits before both $t_1$ and $t_2$.
Furthermore, as observed by Ports and Grittner~\cite{ports2012-ssi},
the proof of Fekete et al.\ in fact implies that $t_3$ is the
\emph{first transaction in the entire cycle to commit}.
\end{lemma}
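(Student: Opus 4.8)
The plan is to reduce the statement to a single monotonicity fact: along any edge of the SI serialization graph that is \emph{not} an rw-antidependency, the commit timestamp strictly increases. First I would record three timing inequalities that follow directly from the SI semantics recalled just above. For a wr edge $t_i \to t_j$ the reader $t_j$ observes only versions committed before its snapshot, so $c(t_i) < b(t_j)$; for a ww edge $t_i \to t_j$ first-committer-wins forbids concurrent writers on the same item, which again forces $c(t_i) < b(t_j)$; and for an rw edge $t_i \xrightarrow{rw} t_j$ the reader's snapshot predates the overwriting commit, giving the weaker inequality $b(t_i) < c(t_j)$. Since $b(t) < c(t)$ always holds, the first two cases yield $c(t_i) < c(t_j)$, so every non-rw edge is commit-monotone, whereas an rw edge only constrains the reader's \emph{begin} time.

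With this in hand, the core argument is a short extremal step. Let $t_3$ be the transaction in the given cycle with the smallest commit time, and let $t_2 \to t_3$ be its incoming cycle edge. If this edge were wr or ww, commit-monotonicity would give $c(t_2) < c(t_3)$, contradicting the minimality of $c(t_3)$; hence $t_2 \xrightarrow{rw} t_3$. Next, let $t_1 \to t_2$ be the edge entering $t_2$ along the cycle. If it were non-rw we would have $c(t_1) < b(t_2)$, and combining with the rw inequality $b(t_2) < c(t_3)$ for the edge $t_2 \to t_3$ gives $c(t_1) < c(t_3)$---again contradicting minimality. Therefore $t_1 \xrightarrow{rw} t_2$ as well, and we have exhibited the consecutive pair $t_1 \xrightarrow{rw} t_2 \xrightarrow{rw} t_3$. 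That $t_3$ commits before both $t_1$ and $t_2$ is immediate from choosing it as the minimum, which is exactly the Ports--Grittner strengthening: $t_3$ is the \emph{first} transaction in the entire cycle to commit, not merely the earliest among the three.

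The main obstacle I anticipate is not the extremal step but making the three timing inequalities airtight and handling the boundary cases. Concretely, I would (i) pin down that the serialization graph here is the multiversion dependency graph whose edge definitions match the wr/ww/rw classification used earlier, so that the SI facts apply edge by edge; (ii) argue strictness of the timestamp comparisons, assuming commit timestamps are distinct (or breaking ties by a fixed total order) so that ``smallest commit time'' is well defined and the contradictions are genuine; and (iii) dispose of short cycles, where $t_1,t_2,t_3$ need not all be distinct---for instance a length-two cycle $t_2 \rightleftarrows t_3$ collapses $t_1$ onto $t_3$---in which case both edges are already rw by the same minimality argument, so the structure still holds under the natural identification. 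Once these technical points are settled, the lemma is simply the combination of the monotonicity fact with the first-committer extremal argument, and no further computation is required.
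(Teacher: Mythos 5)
Your argument is correct, but note that the paper itself offers no proof of this lemma: it is stated purely as a recalled result, attributed to Fekete et al.\ for the two consecutive \emph{rw}-antidependencies and to Ports and Grittner for the strengthening that $t_3$ is the first committer in the entire cycle. What you have written is essentially a faithful reconstruction of the argument in those cited works. The two ingredients you isolate are exactly the right ones: (i) every non-\emph{rw} edge $t_i \to t_j$ under SI forces $c(t_i) < b(t_j) < c(t_j)$ (reads come from the begin snapshot; first-committer-wins rules out concurrent \emph{ww} conflicts), while an \emph{rw} edge yields only $b(t_i) < c(t_j)$; and (ii) the extremal step of picking the cycle member with minimal commit time and walking back two edges, using $c(t_1) < b(t_2) < c(t_3)$ to kill the non-\emph{rw} case for the second edge. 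Your housekeeping items are also the right ones to flag---distinct commit timestamps for a well-defined minimum, and the length-two cycle where $t_1$ collapses onto $t_3$ (the classic write-skew instance, where the dangerous structure is still conventionally recognized). One small bonus your argument delivers for free, though the lemma as stated does not demand it: the minimality of $c(t_3)$ combined with your timing inequalities also shows the two \emph{rw} edges connect pairwise \emph{concurrent} transactions, which is the form in which Fekete et al.\ originally state the dangerous structure. Since the paper leans on this lemma only as imported background for \Cref{lem:ssn-subsume-ssi}, your self-contained proof is a strictly stronger contribution than what the text provides, and I find no gap in it.
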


\begin{lemma}[SSN requires two \emph{rw} edges]\label{lem:ssn-two-rw}
If the SSN exclusion condition $\pi(t)\!\le\!\eta(t)$ is violated for some transaction $t$,
then there exist at least two consecutive \emph{rw} edges contributing to this condition.
\end{lemma}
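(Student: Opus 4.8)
The plan is to exploit SI's snapshot-read semantics to show that a single anti-dependency out of $t$ can never by itself pull $\pi(t)$ down to or below $\eta(t)$; forcing $\pi(t)$ low enough must therefore recruit a \emph{second} rw edge. Throughout I work under the commit-ordered KTO with VF/VO aligned to it, so that (by the alignment property established above) every back-edge is an rw edge, and I assume distinct begin/commit timestamps so all comparisons are strict. I interpret the hypothesis as the exclusion firing, i.e.\ $\pi(t)\le\eta(t)$.

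First I would record two timing facts that follow directly from SI. (a) For any direct rw successor $s$ with $t\xrightarrow{b}s$: the rw rule says $t$ read a version $x_j$ that $s$ later overwrote ($x_j\ll_v x_s$), so $t$'s snapshot at $b(t)$ did not yet contain $s$'s version, giving $c(s)>b(t)$. (b) For any wr or ww forward-predecessor $p$ with $p\xrightarrow{f}t$: a wr edge means $t$ read $p$'s committed version, hence $c(p)<b(t)$; a ww edge means $p$ and $t$ both wrote the same item with $x_p\ll_v x_t$, and since first-committer-wins forbids concurrent writers of one item, $p$ must have committed before $t$ began, again $c(p)<b(t)$. Thus direct anti-dependency successors commit \emph{after} $b(t)$, whereas wr/ww predecessors commit \emph{before} it.

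Next I would extract the structural consequences of $\pi(t)\le\eta(t)$. Since $\eta(t)=c(p)$ for some forward-predecessor $p$ with $c(p)<c(t)$, we get $\pi(t)\le\eta(t)<c(t)$, so the minimum is not attained at $t$ and at least one back-edge $t\xrightarrow{b}s_1$ (necessarily rw) exists. I then split on where $\pi(t)$ lies relative to $b(t)$. If $\pi(t)<b(t)$, fact (a) forbids the minimizer from being a direct successor, so the witnessing back-path $t\xrightarrow{b}s_1\xrightarrow{b}\cdots$ has length $\ge 2$, and its first two edges $t\xrightarrow{rw}s_1\xrightarrow{rw}s_2$ are the desired consecutive rw pair. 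If instead $\pi(t)>b(t)$, then $\eta(t)\ge\pi(t)>b(t)$, so by fact (b) the predecessor $p$ attaining $\eta(t)$ can be neither wr nor ww and must itself be a forward rw edge; then $p\xrightarrow{rw}t\xrightarrow{rw}s_1$ supplies two consecutive rw edges. Either branch yields the claim.

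The main obstacle is establishing the two SI timing inequalities rigorously — in particular ruling out a wr/ww predecessor committing after $b(t)$, which is exactly where first-committer-wins and the snapshot-read rule do the real work; once these are in hand, the case analysis on $\pi(t)$ versus $b(t)$ is immediate. A secondary subtlety is the boundary $\pi(t)=b(t)$ or $\eta(t)=b(t)$, which the distinct-timestamp assumption removes; without it one would carry the same two cases through with $\le$/$<$ bookkeeping and no change in conclusion.
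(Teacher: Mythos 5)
Your proof is correct and rests on the same key facts as the paper's own argument: under SI a wr/ww forward-predecessor must commit before $b(t)$ while an rw back-successor must commit after $b(t)$, so $\pi(t)\le\eta(t)$ forces the forward edge feeding $\eta(t)$ to be an rw edge (your Case~2), which is exactly the paper's contradiction $c(t_1)<b(t_2)<c(t_3)$. Your explicit case split on $\pi(t)$ versus $b(t)$ additionally covers the longer back-chain situation (where two consecutive rw back-edges already exist) that the paper leaves implicit in its ``shortest chain'' assumption---a minor gain in completeness rather than a genuinely different route.
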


\begin{proof}
Assume there is a (shortest) dependency chain $t_1 \to t_2 \to t_3$ that triggers the SSN exclusion
$\pi(t_2)\le \eta(t_2)$. By definition of $\pi,\eta$ under SI, one of these two edges must be an
\emph{rw} back-edge to set $\pi(t_2)$, and the other must be a forward edge to set $\eta(t_2)$.
Thus we may write
\[
t_1 \xrightarrow{f} t_2 \xrightarrow{rw(b)} t_3,
\qquad
\pi(t_2)=c(t_3),\ \ \eta(t_2)=c(t_1).
\]

We now show that the forward edge $t_1\to t_2$ cannot be $\mathit{wr}$ or $\mathit{ww}$ under SI.
If $t_1\to t_2$ were $\mathit{wr}$ or $\mathit{ww}$, then by SI's non-overlap rule we have
$c(t_1)<b(t_2)$. On the other hand, since $t_2 \xrightarrow{rw} t_3$, we must have $b(t_2)<c(t_3)$
(otherwise $t_2$ would have read $t_3$’s version). Hence
\[
\eta(t_2)=c(t_1)\;<\;b(t_2)\;<\;c(t_3)=\pi(t_2),
\]
which contradicts the exclusion condition $\pi(t_2)\le \eta(t_2)$.

Therefore, the forward edge contributing to $\eta(t_2)$ must itself be an $\mathit{rw}$ edge.
In particular, to make $\pi(t_2)\le \eta(t_2)$ feasible, the chain contributing to $\pi$
and the incoming chain contributing to $\eta$ must together include
two $\mathit{rw}$ edges with the forward-in edge at $t_2$;
otherwise, $\pi(t_2)\le\eta(t_2)$ cannot hold under SI.
\end{proof}

\begin{lemma}[The last node commits first]\label{lem:ssn-t3-commit}
If two consecutive \emph{rw} edges contribute to the SSN exclusion condition,
then the last transaction commits first.
\end{lemma}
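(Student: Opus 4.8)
The plan is to reuse the precise chain already extracted in the proof of \Cref{lem:ssn-two-rw} and then simply read off the commit-order constraints that the edge labels and the exclusion condition jointly impose, rather than re-arguing from snapshot semantics. Concretely, I would start from the configuration that triggers the exclusion at the tested (middle) transaction,
\[
t_1 \xrightarrow{rw} t_2 \xrightarrow{rw} t_3,
\qquad
\pi(t_2)=c(t_3),\quad \eta(t_2)=c(t_1),
\]
where the incoming edge $t_1\!\to\!t_2$ is the \emph{forward} rw-edge setting $\eta(t_2)$ and the outgoing edge $t_2\!\to\!t_3$ is the \emph{back} rw-edge (first edge of the back-chain) setting $\pi(t_2)$. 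The claim ``the last transaction commits first'' then means that $c(t_3)$ is the minimum among $c(t_1),c(t_2),c(t_3)$, and this is exactly what I would establish.

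The three inequalities fall out directly. First, because $t_2 \xrightarrow{b} t_3$ is a \emph{back}-edge under the commit-ordered KTO, the definition of back-edge gives $t_3 \ll_K t_2$, i.e.\ $c(t_3)<c(t_2)$. Second, because the exclusion condition $\pi(t_2)\le\eta(t_2)$ is assumed to hold, substituting $\pi(t_2)=c(t_3)$ and $\eta(t_2)=c(t_1)$ yields $c(t_3)\le c(t_1)$. Third, because $t_1 \xrightarrow{f} t_2$ is a \emph{forward}-edge, $t_1 \ll_K t_2$ gives $c(t_1)<c(t_2)$. Chaining these produces $c(t_3)\le c(t_1)<c(t_2)$, so $t_3$ commits no later than $t_1$ and strictly before $t_2$; hence $t_3$, the last node of the chain, commits first.

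The crux is recognizing that the needed commit orderings come from the edge \emph{labels} supplied by \Cref{lem:ssn-two-rw}, not from the rw relation alone: a bare rw anti-dependency (reader $\to$ writer) does not by itself fix the commit order of its endpoints, so $c(t_3)<c(t_2)$ is licensed only because the contributing edge is a \emph{back}-edge under the commit-ordered KTO. The sole remaining wrinkle is the borderline equality $c(t_3)=c(t_1)$ permitted by the non-strict condition; I would dispatch it by appealing to the assumption of distinct commit timestamps (a total commit order), which upgrades $c(t_3)\le c(t_1)$ to $c(t_3)<c(t_1)$, noting that the degenerate coincidence $t_3=t_1$ would instead collapse the chain into a genuine two-node rw-cycle—handled as a real (correctly detected) cycle rather than a false positive. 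Finally, I would observe that the resulting pattern—$t_3$ first to commit among a pair of consecutive rw-edges—exactly reproduces the SSI dangerous structure of \Cref{lem:ssi-ds}, which is the bridge these three lemmas are building toward.
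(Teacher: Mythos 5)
Your argument for the forward-in/back-out configuration matches the paper's proof: with $\pi(t_2)=c(t_3)$ and $\eta(t_2)=c(t_1)$, the exclusion condition forces $c(t_3)\le c(t_1)$, and the edge labels under the commit-ordered KTO supply $c(t_1)<c(t_2)$ and $c(t_3)<c(t_2)$, so $t_3$ commits first; your explicit derivation of all three inequalities is if anything slightly more complete than the paper's. The one omission is that the lemma as stated also covers the case in which \emph{both} contributing rw edges are back-edges (two consecutive links of the $\xrightarrow{b*}$ chain feeding $\pi(t_2)$, rather than the forward-in chain feeding $\eta$); the paper dispatches this in a single sentence, since $c(t_3)<c(t_2)<c(t_1)$ then follows directly from the definition of a back-edge. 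Adding that one-line case would make your proof cover the full statement.
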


\begin{proof}
If both \emph{rw} edges are back edges, then by definition the last one commits first.
If one \emph{rw} edge is forward-in and the other is back-out, we have
$t_1 \xrightarrow{rw(f)} t_2 \xrightarrow{rw(b)} t_3$.
Here $\pi(t_2)=c(t_3)$ and $\eta(t_2)=c(t_1)$.
The exclusion condition requires $\pi(t_2)\le \eta(t_2)$,
i.e.\ $c(t_3)\le c(t_1)$.
Thus $t_3$ must be the first committer.
\end{proof}

\begin{lemma}[SSN subsumes SSI]\label{lem:ssn-subsume-ssi}
If a transaction aborts under SSN, then it also aborts under SSI.
\end{lemma}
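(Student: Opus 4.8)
The plan is to establish the set containment ``aborts under SSN'' $\subseteq$ ``aborts under SSI'' by showing that every SSN abort exposes precisely the dangerous structure that SSI is designed to reject. So I would assume $t$ aborts under SSN---i.e.\ the exclusion condition $\pi(t)\le\eta(t)$ holds---and reduce this to the statement that $t$ is an SSI \emph{pivot}: a transaction bearing both an inbound and an outbound rw-antidependency. The correctness-preserving observation is that SSN's commit-time inequality is strictly \emph{stronger} than ``a pivot exists,'' so whenever it fires, SSI's weaker trigger has already fired.

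First I would apply Lemma~\ref{lem:ssn-two-rw}: because $\pi(t)\le\eta(t)$ is satisfied, two consecutive rw edges contribute to the condition, and---following that lemma---the forward-in edge at $t$ is itself rw under SI. This produces the chain $t_1 \xrightarrow{rw} t \xrightarrow{rw} t_3$ with $\pi(t)=c(t_3)$ and $\eta(t)=c(t_1)$, placing the aborting transaction $t$ in the pivot position with one inbound and one outbound anti-dependency. Because the SI assumptions of \S\ref{subsec:ssn-subsume-ssi} rule out concurrent wr/ww conflicts, both extracted edges are genuine rw-antidependencies between concurrent transactions, exactly the kind SSI tracks. I would then record from Lemma~\ref{lem:ssn-t3-commit} (sharpened by Lemma~\ref{lem:ssi-ds}) that $t_3$ commits first, and from $\pi(t)=c(t_3)\le c(t_1)=\eta(t)$ that $t_3$ also commits before $t_1$.

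Next I would connect this to SSI's runtime rule: SSI maintains per-transaction inbound/outbound rw-conflict markers and aborts any transaction that becomes a pivot, firing on the mere \emph{existence} of the two-edge structure rather than on an actual MVSG cycle. Since $t$ is such a pivot---inbound $t_1 \xrightarrow{rw} t$, outbound $t \xrightarrow{rw} t_3$---and $t_3$'s earlier commit guarantees the outbound marker is already live when $t$ reaches its commit point, SSI flags and aborts $t$, the same transaction SSN aborts. Strictness (and hence the subsection's ``more permissive'' claim) then follows from \S\ref{sec:ssn-illustrative}: schedule $M_1$ is acyclic yet contains the pivot $t_4$, so SSI aborts it, whereas SSN aborts $t_4$ only when $c(t_2)\le c(t_3)$---choosing $c(t_3)<c(t_2)$ leaves $t_4$ safe under SSN but aborted under SSI.

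The hard part will be pinning down the abort \emph{target} rather than merely the existence of a dangerous structure: to read the lemma literally I must argue that SSI aborts the pivot $t$ itself and not $t_1$ or $t_3$. This is where the commit-order facts do real work---Lemmas~\ref{lem:ssn-t3-commit} and~\ref{lem:ssi-ds} fix $t_3$ as the first committer and, via the exclusion inequality, position $t_1$ on the $\eta$ side, so that $t$ is the transaction whose commit attempt completes and detects the structure. I would therefore need to state SSI's abort policy explicitly enough (pivot-aborting at commit, under first-committer-wins) that the target is unambiguous, and to confirm that the mapping between SSN's $\pi/\eta$ bookkeeping and SSI's inConflict/outConflict flags is faithful for the extracted chain.
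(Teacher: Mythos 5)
Your proposal is correct and follows essentially the same route as the paper: combine Lemma~\ref{lem:ssn-two-rw} (the SSN exclusion forces two consecutive rw edges with the aborting transaction as pivot) with Lemma~\ref{lem:ssn-t3-commit} (the last transaction commits first) and identify the result with SSI's dangerous structure. Your additional care about the abort \emph{target} and the faithfulness of the $\pi/\eta$-to-inConflict/outConflict mapping goes beyond the paper's terse two-line argument, but it is a refinement of the same proof, not a different one.
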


\begin{proof}
Lemmas~\ref{lem:ssn-two-rw} and~\ref{lem:ssn-t3-commit} show that
SSN exclusion requires two consecutive \emph{rw} edges,
with the last committer first to commit.
This matches exactly the SSI dangerous structure,
so any SSN abort implies an SSI abort.
\end{proof}

\begin{remark}[The converse does not hold]
There exist SI histories with an all-back-edge chain
$t_1 \xrightarrow{rw(b)} t_2 \xrightarrow{rw(b)} t_3$
where $c(t_3)<c(t_2)<c(t_1)$.
Here SSI aborts because the dangerous structure is present,
while SSN does not, since there is no forward-in edge to constrain $\eta(\cdot)$.
\end{remark}

\noindent\textbf{Conclusion.} Under SI, $A_{\mathrm{SSN}} \supset A_{\mathrm{SSI}}$ 
(i.e., SSN is strictly more permissive than SSI).

\smallskip
\noindent
Following the SSI literature~\cite{cahill2009-ssi-tods}, 
we use the term \emph{pivot} to denote the middle transaction in the classic
three-node dangerous structure 
$t_1 \xrightarrow{f} t_2 \xrightarrow{b} t_3$,
where the rightmost transaction commits first in the cycle.
SSN inherits this notion and detects such pivot structures to prevent
serialization cycles in the MVSG, 
thereby achieving its strict subsumption of SSI.

\section{Extended Serial Safety Net (ESSN)}
\label{sec:essn}
This section introduces ESSN, an extension of SSN that generalizes the definition of transaction ordering and 
exclusion criteria to broaden the space of histories that can be accepted as serializable. 
We first generalize the notion of KTO, then redefine the exclusion condition 
using $\pi(t)$ and a new value $\xi(t)$. We demonstrate how ESSN eliminates false positives caused by SSN 
and prove its correctness by showing that any cycle in the MVSG will necessarily include at least one abort target 
under ESSN.

\subsection{Generalizing the KTO}

In SSN, the definitions of the KTO, as well as the values $\pi(t)$ and $\eta(t)$, are based on the 
commit timestamp $c(t)$ of transaction $t$. ESSN generalizes this by replacing $c(t)$ with an arbitrary function 
$\sigma(t)$ that returns a totally ordered value.

For example, in MVTO, a well-known begin-ordered CC 
(see Tab.~\ref{tab:cc_comparison} in Section~\ref{sec:related-work} for details), 
$\sigma(t)$ may represent the begin timestamp of transaction~$t$. 
With this generalization, algorithms similar to SSN can be defined even under different orderings.

Since ESSN evaluates its exclusion condition at commit-time, the relevant metadata
($\pi(t)$ and $\xi(t)$ introduced below) must be finalized when $t$ commits.
If the KTO is not commit-based, this may require delaying $t$'s commit
until the outcome of other transactions is known.
We refer to this phenomenon as a \emph{commit-stall}, and analyze its
implications in detail in \S\ref{subsec:begin-stalls}.

\subsection{Definition of the ESSN exclusion test}

ESSN extends the SSN exclusion test for general KTOs. Like SSN, ESSN decides 
at commit time whether transaction $t$ would introduce a cycle in the MVSG$(m, \ll_v)$.
If a cycle may be formed, $t$ is aborted; otherwise, it is allowed to commit. This ensures that $m$ remains MVSR.

The KTO $\ll_K$ is generalized using a function taking values in a totally ordered domain $\sigma(t)$ over transactions. Specifically:

\begin{equation}
\sigma(t_i) < \sigma(t_j) \iff t_i \ll_K t_j
\end{equation}

To determine whether $t$ may create a cycle, ESSN defines two values for each transaction $t$:

\begin{align}
\pi(t) &= \min \left( \{ \sigma(u) \mid t \xrightarrow{b*} u \} \cup \{ \sigma(t) \} \right)\notag \\
       &= \min \left( \{ \pi(u) \mid t \xrightarrow{b} u \} \cup \{ \sigma(t) \} \right) \\
\xi(t) &= \max \left( \{ \pi(u) \mid u \xrightarrow{f} t \} \cup \{-\infty\} \right) \label{eq:essn-condition}
\end{align}

While $\eta(t)$ in SSN directly uses commit times of forward predecessors, ESSN's $\xi(t)$ incorporates the $\pi$ values of these predecessors, allowing deeper propagation of serializability constraints. This results in fewer unnecessary aborts.

The ESSN exclusion condition is as follows:

\begin{equation}
\pi(t) \leq \xi(t) \label{eq:essn-criterion}
\end{equation}

If Eq.~\eqref{eq:essn-criterion} holds, transaction $t$ is aborted; otherwise, it is allowed to commit.

\subsection{Relationship Between ESSN and SSN}

ESSN is a generalization of SSN that admits a strictly larger class of serializable histories. This section 
formally establishes their relationship.

\subsubsection{Inclusion Relationship}

The following two lemmas clarify how the ESSN exclusion criterion is weaker (i.e., more permissive) than that of SSN.

\begin{lemma} \label{lem:xi-le-eta}
Let $\sigma(t)$ be the commit timestamp $c(t)$. Then, $\xi(t) \leq \eta(t)$.
\end{lemma}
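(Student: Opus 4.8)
The plan is to prove the pointwise inequality $\xi(t) \le \eta(t)$ by directly comparing the two maxima over the same index set, namely the set of forward-predecessors $\{u : u \xrightarrow{f} t\}$. Recall that
\[
\eta(t) = \max\bigl(\{c(u) \mid u \xrightarrow{f} t\} \cup \{-\infty\}\bigr),
\qquad
\xi(t) = \max\bigl(\{\pi(u) \mid u \xrightarrow{f} t\} \cup \{-\infty\}\bigr).
\]
Since we are told to take $\sigma(t) = c(t)$, both definitions range over the identical collection of forward-predecessors, and the $-\infty$ sentinel handles the empty case for both. So it suffices to show that the corresponding terms dominate: for every fixed $u$ with $u \xrightarrow{f} t$, we have $\pi(u) \le c(u)$. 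Given that, the max on the $\xi$ side is a max of smaller-or-equal terms over the same index set, hence $\xi(t) \le \eta(t)$.

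First I would establish the per-transaction claim $\pi(u) \le c(u)$, which is really the heart of the matter. This follows immediately from the definition of $\pi$: under $\sigma = c$ we have
\[
\pi(u) = \min\bigl(\{c(w) \mid u \xrightarrow{b*} w\} \cup \{c(u)\}\bigr),
\]
and a minimum over a set that explicitly contains $c(u)$ can never exceed $c(u)$. So $\pi(u) \le c(u)$ holds unconditionally, with no appeal to acyclicity or to any ordering property—it is purely the fact that $c(u)$ is one of the candidates being minimized. I would state this as a one-line observation, perhaps isolating it so it can be reused later.

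Next I would assemble the argument over the index set. Fix any enumeration of the forward-predecessors of $t$. For each such $u$, the term $\pi(u)$ appearing inside $\xi(t)$ is bounded above by the term $c(u)$ appearing inside $\eta(t)$. Taking the maximum preserves termwise inequalities over a common (finite) index set, and the appended $-\infty$ contributes equally to both, so $\max_u \pi(u) \le \max_u c(u)$, i.e. $\xi(t) \le \eta(t)$. If the predecessor set is empty, both sides equal $-\infty$ and the inequality holds with equality.

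I do not anticipate a genuine obstacle here; the lemma is essentially a monotonicity-of-max observation resting on the trivial bound $\pi(u) \le c(u)$. The only point demanding a little care is confirming that $\xi$ and $\eta$ really are taken over the \emph{same} set of forward-edges $u \xrightarrow{f} t$ (they are, since only the summand $\pi(u)$ versus $c(u)$ differs between the two definitions, not the edge relation), and ensuring the empty-set/$-\infty$ boundary case is not overlooked. I would keep the proof to a few lines, flagging that the inequality is generally strict—which is precisely the source of ESSN's extra permissiveness—and that Lemma~\ref{lem:xi-le-eta} combined with a companion result will be what yields strict subsumption of SSN.
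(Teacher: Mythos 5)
Your proposal is correct and rests on the same two observations as the paper's proof: that $\pi(u)\le\sigma(u)=c(u)$ because the minimum defining $\pi(u)$ ranges over a set containing $c(u)$, and that $\xi$ and $\eta$ are maxima over the same set of forward-predecessors, so the termwise bound transfers to the maxima (the paper phrases this by picking the witness $u$ achieving $\xi(t)$ rather than arguing termwise, but this is the same argument). No gaps; the $-\infty$ boundary case is handled identically in both.
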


\begin{proof}
If $\xi(t) = -\infty$, the inequality clearly holds. Otherwise, by definition (Eq.~\eqref{eq:essn-condition}), 
there exists a transaction $u$ such that $u \xrightarrow{f} t$ and $\xi(t) = \pi(u)$.  
From the definition of $\eta(t)$ (Eq.~\eqref{eq:ssn-condition}), it follows that $c(u) \leq \eta(t)$.  
Since $\pi(u) \leq \sigma(u)$ by definition, and $\sigma(u) = c(u)$ in SSN, we get:

\[
\xi(t) = \pi(u) \leq \sigma(u) = c(u) \leq \eta(t)
\]

Hence, $\xi(t) \leq \eta(t)$.
\end{proof}

\begin{lemma}
There exist histories where SSN would abort a transaction, but ESSN would not.
\end{lemma}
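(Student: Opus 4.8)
The plan is to exhibit a single concrete history---the schedule $M_1$ already introduced in Eq.~\eqref{eq:M1} and analyzed in \S\ref{sec:ssn-illustrative}---and show that SSN aborts while ESSN commits. This is the cleanest route because all the structural work has been done: we know the MVSG of $M_1$ is acyclic (so any serializability criterion ought to accept it), and we have already computed $\pi(t_4)=c(t_2)$ and $\eta(t_4)=c(t_3)$ under the commit-ordered KTO with the stated VO. Since the statement is purely existential (``there exist histories''), a witness suffices and no general argument is required.

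First I would fix the setting: take $\sigma(t)=c(t)$ so that ESSN and SSN operate on the same commit-ordered KTO, and impose the timing $c(t_2)\le c(t_3)$ consistent with the schedule $M_1$. Then I would recall the SSN side: by Eq.~\eqref{eq:ssn-exclusion}, SSN declares $t_4$ unsafe precisely because $\pi(t_4)=c(t_2)\le c(t_3)=\eta(t_4)$, so $t_4$ aborts (matching the $a_4$ in $M_1$ and the over-abort already diagnosed). Next I would compute the ESSN side. The key difference is that $\xi(t_4)$ uses $\pi$ of the forward predecessor rather than its commit time: the only forward-in edge to $t_4$ is $t_3\xrightarrow{f}t_4$, so $\xi(t_4)=\pi(t_3)$. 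Because $t_3$ has no outgoing back-edges in the MVSG of $M_1$, we have $\pi(t_3)=\sigma(t_3)=c(t_3)$; but the crucial point to verify is whether $\pi(t_4)=c(t_2)$ can fail to be $\le\xi(t_4)$. Here I expect the gap to come not from $t_4$ itself but from the propagation structure---I would trace $\pi$ and $\xi$ for each transaction in $M_1$ to confirm the ESSN condition Eq.~\eqref{eq:essn-criterion} is violated (i.e.\ $\pi(t_4)>\xi(t_4)$) so that $t_4$ commits.

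The main obstacle is that, on the surface, ESSN's $\xi(t_4)=\pi(t_3)=c(t_3)$ together with $\pi(t_4)=c(t_2)\le c(t_3)$ would seem to reproduce the SSN abort rather than avoid it, so the naive single-edge computation does not by itself separate the two criteria. The real content is therefore to identify what $\pi(t_3)$ actually is once forward-propagation is taken into account, and to confirm that the MVSG edges incident to $t_4$ are classified so that $t_3$'s $\pi$ value is strictly larger than $c(t_2)$. Concretely, I would need to check the anti-dependency directions in Fig.~\ref{fig:mvsg-cycle}: the back-edge is $t_4\xrightarrow{b}t_2$ (making $\pi(t_4)=c(t_2)$) and the forward-edge is $t_3\xrightarrow{f}t_4$, and I must verify that no back-edge out of $t_3$ drags $\pi(t_3)$ down to $c(t_2)$ or below. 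If $t_3$ indeed has $\pi(t_3)=c(t_3)$ while the correct reading of the propagated constraint yields $\xi(t_4)<\pi(t_4)$, the separation holds.

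Once the arithmetic is pinned down, I would close by stating that $M_1$ is the required witness: SSN aborts $t_4$ by Eq.~\eqref{eq:ssn-exclusion}, whereas ESSN commits it because Eq.~\eqref{eq:essn-criterion} fails, and since the MVSG of $M_1$ is acyclic, ESSN's acceptance is sound. The separation is strict because $\xi(t)\le\eta(t)$ always (Lemma~\ref{lem:xi-le-eta}) while here the inequality is strict in a way that flips the exclusion test. I would emphasize that the essential mechanism is ESSN's substitution of $\pi(u)$ for $c(u)$ in the forward term, which is exactly the ``deeper propagation'' noted after Eq.~\eqref{eq:essn-condition}; this example isolates a case where that propagation strictly relaxes the test.
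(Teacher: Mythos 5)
Your overall strategy---exhibiting $M_1$ as a concrete witness---is sound and is in fact more explicit than the paper's own proof, which merely chains $\pi(t)\le\xi(t)\le\eta(t)$ via Lemma~\ref{lem:xi-le-eta} and asserts that strictness can occur (deferring the witness to Example~2). However, your execution of the witness computation contains a genuine error that, as written, would make the witness fail. You claim that ``$t_3$ has no outgoing back-edges in the MVSG of $M_1$, so $\pi(t_3)=c(t_3)$.'' This is false: $r_3(x_0)$ followed by $w_1(x_1)$ with $x_0\ll_v x_1$ induces the rw edge $t_3\to t_1$, and since $c_1\prec c_3$ this is a back-edge $t_3\xrightarrow{b}t_1$ (it is drawn in Fig.~\ref{fig:mvsg-cycle}). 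If your claim were true, then $\xi(t_4)=\pi(t_3)=c(t_3)=\eta(t_4)$ and ESSN would abort $t_4$ exactly as SSN does---you correctly flag this as ``the main obstacle,'' but then resolve it in the wrong direction.

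The correct mechanism is the opposite of what you state: the back-edge out of $t_3$ \emph{must} drag $\pi(t_3)$ down, and that is precisely what saves $t_4$. Since $t_1$ has no outgoing back-edges, $\pi(t_1)=c(t_1)$, hence $\pi(t_3)=\min(c(t_1),c(t_3))=c(t_1)$, and therefore $\xi(t_4)=\pi(t_3)=c(t_1)<c(t_2)=\pi(t_4)$, so ESSN admits $t_4$ while SSN aborts it via $c(t_2)\le c(t_3)$. Your stated verification goals---``confirm that $t_3$'s $\pi$ value is strictly larger than $c(t_2)$'' and ``verify that no back-edge out of $t_3$ drags $\pi(t_3)$ down to $c(t_2)$ or below''---both invert the required inequality: ESSN admits iff $\pi(t_4)>\xi(t_4)$, i.e.\ iff $\pi(t_3)<c(t_2)$. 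With the edge set and the inequality corrected, the witness goes through and your proof would be complete (and arguably more self-contained than the paper's).
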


\begin{proof}
Compare the SSN and ESSN exclusion conditions:

\begin{align}
\text{SSN: } \pi(t) &\leq \eta(t) \tag{\ref{eq:ssn-exclusion}} \\
\text{ESSN: } \pi(t) &\leq \xi(t) \tag{\ref{eq:essn-criterion}}
\end{align}

By \Cref{lem:xi-le-eta}, we have $\xi(t) \leq \eta(t)$. Therefore:

\[
\pi(t) \leq \xi(t) \leq \eta(t)
\]

This implies that SSN may abort a transaction even if the ESSN condition is satisfied. 
Hence, there are cases where SSN aborts but ESSN allows commit.
\end{proof}

\paragraph{Example 2 (Revisiting Example~1).}
We revisit the same schedule $M_1$ introduced in Example~1, but now evaluate it using ESSN's exclusion condition.
Consider the MVSG shown in Fig.~\ref{fig:essn-forward-avoid}.
In this graph, there is a back-edge $t_4 \xrightarrow{b} t_2$, which would trigger the SSN exclusion condition
$\pi(t_4)\le \eta(t_4)$ (with $\sigma(\cdot)=c(\cdot)$).

However, $t_3 \xrightarrow{f} t_4$ is a forward-edge. Under ESSN, this edge
\emph{sets} the forward-side bound of $t_4$ by propagating the forward predecessor's $\pi$:
\[
  \xi(t_4) \;\leftarrow\; \max\bigl\{\xi(t_4),\, \pi(t_3)\bigr\}.
\]
Consequently, $t_4$ is allowed to commit because $\pi(t_4)>\xi(t_4)$, even in cases
where SSN would abort due to $\pi(t_4)\le \eta(t_4)$.
This example illustrates how ESSN reduces unnecessary aborts through more precise forward-side propagation.

\begin{figure}[t]
  \centering
  \includegraphics[width=0.7\linewidth]{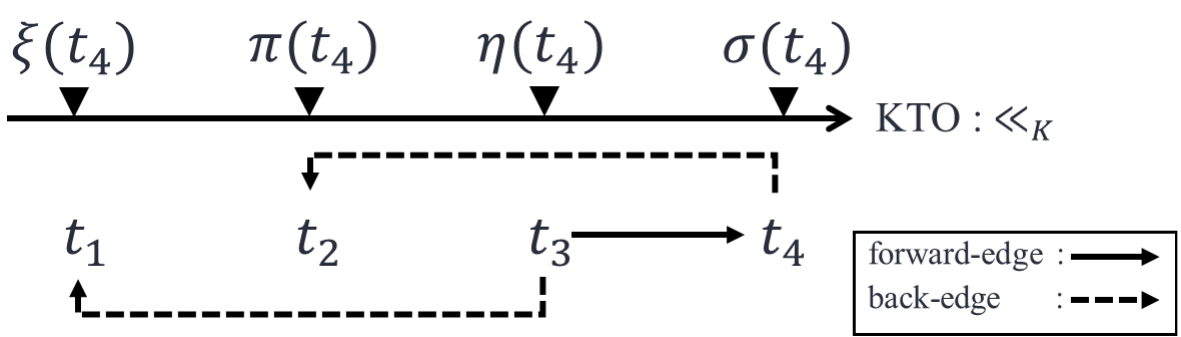}
\caption{MVSG for Example~2. SSN aborts $t_4$ due to $\pi(t_4)\le \eta(t_4)$,
whereas ESSN \emph{sets} $\xi(t_4)$ to $\pi(t_3)$ via the forward-edge $t_3\xrightarrow{f}t_4$
and allows $t_4$ to commit when $\pi(t_4)>\xi(t_4)$.}

\label{fig:essn-forward-avoid}

\end{figure}

\subsubsection{Abort Target and Cycle Resolution}

While ESSN generalizes SSN, it does not necessarily designate the same abort target(s) when a cycle is present 
in the MVSG.

Consider a variant of the MVSG from Fig.~\ref{fig:essn-forward-avoid}, where we add a forward-edge $t_2 \rightarrow t_3$ 
(see Fig.~\ref{fig:mvsg-with-extra-edge}). 
The transactions are still arranged left to right according to a commit-time KTO.

\begin{itemize}
  \item Under SSN, both $t_3$ and $t_4$ are abort targets.
  \item Under ESSN, only $t_3$ is an abort target, since
    \[
      \pi(t_3) = \sigma(t_1), \quad \xi(t_3) = \sigma(t_2),
    \]
    and $\pi(t_3) \le \xi(t_3)$.
  \item $t_4$ is not an abort target under ESSN, by the same reasoning as in Fig.~\ref{fig:essn-forward-avoid}.
\end{itemize}

As shown in Fig.~\ref{fig:mvsg-with-extra-edge}, the forward-edge $t_2 \rightarrow t_3$ contributes directly 
to $\xi(t_3)$. Intuitively, if $\sigma(t_1) < \sigma(t_2)$ and there is a forward path to $t_3$, then $t_3$ should 
be responsible for breaking the cycle; aborting $t_3$ renders aborting $t_4$ unnecessary. 
If instead $\sigma(t_2) < \sigma(t_1)$, the ESSN condition may trigger for $t_4$.

\begin{figure}[t]
  \centering
  \includegraphics[width=0.7\linewidth]{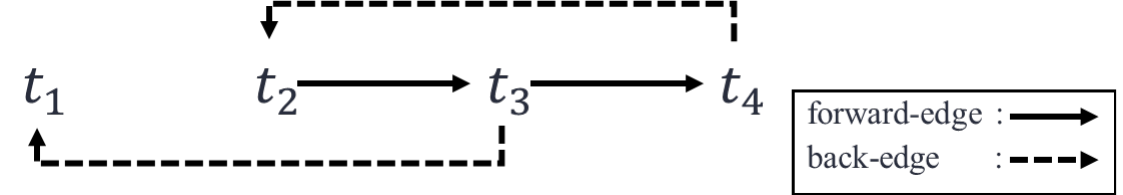}
  \caption{MVSG with an additional forward-edge from $t_2$ to $t_3$.}
  \label{fig:mvsg-with-extra-edge}
\end{figure}

\paragraph{Comparing ``chain vs.\ chain''.}
Our interest in ESSN also originated from a case (Fig.~\ref{fig:mvsg-with-extra-edge}) where SSN marked
\emph{two} transactions ($t_3$ and $t_4$) as abort candidates after adding a 
``whisker'' \emph{back-edge} $t_3 \rightarrow t_1$.
This reflects an asymmetry in SSN's exclusion test: it compares a \emph{back-edge chain} against a \emph{single} 
forward predecessor.
By contrast, adding the whisker back-edge yields a more symmetric comparison---aggregated chains on both sides.
This symmetry also makes implementation simpler.

\subsection{Correctness of the ESSN Criterion}

We now prove that any transaction allowed to commit under the ESSN exclusion condition preserves MVSR.

\begin{theorem}
All transactions that successfully commit under ESSN produce an MVSG that is acyclic, i.e., ESSN guarantees MVSR.
\end{theorem}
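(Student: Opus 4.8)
The plan is to argue by contradiction. Assume the committed transactions induce an $\mathrm{MVSG}(m,\ll_v)$ that contains a cycle $C$. By hypothesis every node $t$ on $C$ passed the exclusion test, i.e.\ satisfies $\pi(t) > \xi(t)$. I will exhibit a single node of $C$ that in fact satisfies $\pi(t) \le \xi(t)$, contradicting this and forcing acyclicity (hence MVSR).

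First I would record two monotonicity facts that fall straight out of the definitions together with the KTO edge-classification. (i) Along any back-edge $t_i \xrightarrow{b} t_j$ the value $\pi$ is non-increasing, $\pi(t_i) \le \pi(t_j)$, since $\pi(t_i) = \min(\{\pi(u) \mid t_i \xrightarrow{b} u\} \cup \{\sigma(t_i)\})$ takes a minimum over back-successors that includes the term $\pi(t_j)$. (ii) Along any forward-edge $u \xrightarrow{f} t$ we have $\xi(t) \ge \pi(u)$, immediately from $\xi(t) = \max(\{\pi(u) \mid u \xrightarrow{f} t\} \cup \{-\infty\})$. I would also record the structural fact that a cycle cannot consist solely of back-edges: a back-edge strictly decreases $\sigma$ (since $t_i \xrightarrow{b} t_j$ means $\sigma(t_j) < \sigma(t_i)$), so an all-back cycle would force $\sigma$ to strictly decrease all the way around and return to its starting value, which is impossible in a total order.

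Next I would localize the contradiction at a node of minimum $\pi$. Let $\pi_{\min}$ be the least value of $\pi$ over the nodes of $C$, and let $S \subseteq C$ be the non-empty set of nodes attaining it. For any $t \in S$, inspect its incoming cycle-edge $t' \to t$. If that edge is forward, then by (ii) $\xi(t) \ge \pi(t') \ge \pi_{\min} = \pi(t)$, so $\pi(t) \le \xi(t)$ --- the abort condition --- and we are done. If instead the edge is a back-edge, then by (i) $\pi(t') \le \pi(t) = \pi_{\min}$, forcing $\pi(t') = \pi_{\min}$ and hence $t' \in S$; thus $S$ is closed under taking cycle-predecessors whenever the connecting edge is a back-edge.

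The main obstacle is to guarantee that the forward-incoming case actually occurs for some node of $S$. I would argue that if it never did, then every node of $S$ would have its cycle-predecessor again in $S$, joined by a back-edge; iterating the predecessor step around the finite cycle then shows $S = C$ with every cycle-edge a back-edge, contradicting the all-back-edge impossibility from the previous paragraph. Hence some $t \in S$ has a forward incoming edge and satisfies $\pi(t) \le \xi(t)$, contradicting the assumption that every node of $C$ committed. The delicate points to get right are that the minimum over back-successors in the definition of $\pi$ genuinely yields fact (i) for the particular cycle-edge (not merely through the transitive closure $\xrightarrow{b*}$), and that the predecessor-closure walk stays confined to $C$ and terminates by returning to its start --- which holds precisely because $S$ and the walk are defined along the single cycle's own edges.
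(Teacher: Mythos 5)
Your proof is correct, but it takes a genuinely different extremal route from the paper's. The paper first isolates the cycle node of minimum $\sigma$, uses it to show that the set $F$ of forward-edge sources on the cycle is non-empty, then picks $t_k\in F$ \emph{maximizing} $\pi$ and proves, via a two-case analysis (whether the next edge out of $t_{k+1}$ is forward or back, the latter requiring a descent along a back-edge chain to another member of $F$), that $\pi(t_{k+1})\le\pi(t_k)\le\xi(t_{k+1})$; the abort target is the forward \emph{successor} $t_{k+1}$. You instead \emph{minimize} $\pi$ over all cycle nodes and inspect the \emph{predecessor}: for a minimizer $t$ with a forward in-edge $t'\xrightarrow{f}t$, the chain $\xi(t)\ge\pi(t')\ge\pi_{\min}=\pi(t)$ is immediate, because minimality over the whole cycle makes $\pi(t')\ge\pi(t)$ trivial for any cycle node $t'$ --- which is exactly what lets you dispense with the paper's Case~2 back-chain argument. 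The price you pay is having to show that some minimizer actually has a forward in-edge, which you do by closing the minimizer set $S$ under back-edge cycle-predecessors and invoking the impossibility of an all-back-edge cycle ($\sigma$ would strictly decrease around a closed walk); the paper pays the analogous price up front when it derives $F\neq\varnothing$ from the minimum-$\sigma$ node. Both arguments rest on the same two ingredients ($\pi$ non-increasing along back-edges, $\xi$ dominating forward predecessors' $\pi$), but your decomposition is somewhat more economical and locates the abort target with a cleaner single inequality.
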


\begin{proof}
Suppose, for contradiction, that MVSG$(m, \ll_v)$ contains a cycle $C_n$ with $n \ge 2$ edges.
Let $t_n$ be the transaction in $C_n$ with the smallest $\sigma(t)$.
Then we can write the cycle (with cyclic indices modulo $n$) as
\[
t_1 \rightarrow t_2 \rightarrow \cdots \rightarrow t_{n-1} \bedge t_n \fedge t_1 .
\]
In particular, $t_{n-1} \bedge t_n$ and $t_n \fedge t_1$.

Define the set of forward-edge sources by
\begin{equation}
F := \{\, t_i \mid t_i \xrightarrow{f} t_{i+1} \,\}, \qquad \text{where } t_{n+1} := t_1.
\end{equation}
By the minimality of $\sigma(t_n)$ we have $t_n \fedge t_1$, hence $t_n \in F$ and so $F \neq \varnothing$.

Let $t_k \in F$ be a transaction that maximizes $\pi(\cdot)$ over $F$. Then $t_k \xrightarrow{f} t_{k+1}$ and by definition:

\[
\pi(t_k) \leq \xi(t_{k+1})
\]

We show that $\pi(t_{k+1}) \leq \pi(t_k)$, which implies:

\[
\pi(t_{k+1}) \leq \xi(t_{k+1})
\]

and thus $t_{k+1}$ is an abort target under ESSN.

\textbf{Case 1:} If $t_{k+1} \xrightarrow{f} t_{k+2}$, then $t_{k+1} \in F$. By definition of $t_k$, $\pi(t_{k+1}) \leq \pi(t_k)$.

\textbf{Case 2:} If $t_{k+1} \xrightarrow{b} t_{k+2}$, let $t_j$ be the transaction reachable from $t_{k+1}$ via back-edges such that $\sigma(t_j)$ is minimized. Then $t_j \xrightarrow{f} t_{j+1}$, so $t_j \in F$. By definition of $t_k$:

\[
\pi(t_j) \leq \pi(t_k)
\Rightarrow \pi(t_{k+1}) \leq \pi(t_j) \leq \pi(t_k)
\]

Hence:

\[
\pi(t_{k+1}) \leq \xi(t_{k+1})
\]

and $t_{k+1}$ is an abort target under ESSN.

\end{proof}

\noindent
\textbf{Summary:} Any transaction that commits under ESSN cannot participate in a cycle, and hence all committed transactions form an acyclic MVSG. Therefore, ESSN guarantees serializability.

\section{Implementation of ESSN}
\label{sec:implementation}

This section presents a commit-time implementation of ESSN for a typical MVCC engine.
We first outline the metadata layout and then describe the commit protocol. As in SSN, we maintain
per-version metadata along the version chain and exploit \emph{previous-edge-only} maintenance, so that
writers inspect only their immediate predecessor at commit time. This avoids chain scans and
keeps version objects compact. No early aborts are performed at read or write time.
We then discuss the operational implications of a begin-ordered KTO under ESSN 
and show how the \emph{anti-pivot} structure illustrates ESSN's structural advantage,
extending naturally to general choices of KTO.

\paragraph{Reduction to adjacent edges (direct dependencies).}
We use a conventional \emph{adjacent-reduced} MVSG: for each key, let its versions be
ordered by the declared VO $\ll_v$ as $v_0 \prec v_1 \prec \cdots$ with
writers $W_0,W_1,\dots$.
We keep only (i) one \emph{wr} edge $W_k \!\to\! R$ for each read $R$ that returns $v_k$,
(ii) \emph{ww} edges between \emph{adjacent} writers $W_k \!\to\! W_{k+1}$, and
(iii) for each read of $v_k$, a single \emph{rw} (anti-)dependency $R \!\to\! W_{k+1}$
to the \emph{next writer} that overwrites $v_k$.
All non-adjacent $ww$ and farther $rw$ edges are implied by the $ww$ chain (i.e., the version chain), so reachability---and thus cycle existence---is preserved.
This follows Adya's direct-dependency convention for DSG (and for MVCC, MVSG)~\cite{adya1999-weakconsistency}, 
and Revilak's subsequent proof that DSG suffices for serializability checking~\cite{revilak2011-pssi}.
\emph{Consequently, the per-version metadata can be kept minimal.}

\paragraph{Computing $\pi$ and $\xi$.}

\begin{lemma}[Strict monotonicity of $\pi$ on a version chain]
\label{lem:pi-monotone}
Let a key's versions be ordered by the declared VO $\ll_v$ and aligned with the KTO
(e.g., commit-ordered $\ll_v$ with commit-ordered KTO). For adjacent writers
$W_k \xrightarrow{ww} W_{k+1}$ that both commit under ESSN, we have
$\pi(W_k) < \pi(W_{k+1})$.
\end{lemma}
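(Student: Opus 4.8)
The plan is to read the strict inequality directly out of three observations and chain them: the alignment hypothesis forces the $ww$-edge to be a \emph{forward}-edge, the definition of $\xi$ then exhibits $\pi(W_k)$ as a lower bound for $\xi(W_{k+1})$, and the fact that $W_{k+1}$ survives the ESSN test supplies a strict gap $\xi(W_{k+1}) < \pi(W_{k+1})$. Composing these gives $\pi(W_k) \le \xi(W_{k+1}) < \pi(W_{k+1})$, which is exactly the claim. No induction or traversal along the version chain is needed; the whole argument is local to the adjacent pair $(W_k, W_{k+1})$.

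First I would classify the edge. Since the versions satisfy $v_k \ll_v v_{k+1}$ and the VO is aligned with the KTO, the writers obey $\sigma(W_k) < \sigma(W_{k+1})$, i.e.\ $W_k \ll_K W_{k+1}$; hence the $ww$-edge is a forward-edge $W_k \fedge W_{k+1}$, consistent with the earlier observation that under aligned VF/VO every $wr$ and $ww$ edge is forward. Next I would invoke the definition of $\xi$ in Eq.~\eqref{eq:essn-condition}: because $W_k \xrightarrow{f} W_{k+1}$, the value $\pi(W_k)$ belongs to the set over which the maximum defining $\xi(W_{k+1})$ is taken, so $\pi(W_k) \le \xi(W_{k+1})$.

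Finally I would use the commit hypothesis for $W_{k+1}$. By Eq.~\eqref{eq:essn-criterion}, ESSN aborts a transaction $t$ precisely when $\pi(t) \le \xi(t)$; since $W_{k+1}$ commits, this condition must fail, giving $\xi(W_{k+1}) < \pi(W_{k+1})$. Combining with the previous bound yields $\pi(W_k) \le \xi(W_{k+1}) < \pi(W_{k+1})$, and the strict monotonicity $\pi(W_k) < \pi(W_{k+1})$ follows at once.

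The only subtlety—hardly an obstacle—is locating where the \emph{strictness} comes from: the $\xi$-step yields only a non-strict inequality, and it is the surviving-commit condition $\pi(W_{k+1}) > \xi(W_{k+1})$ that upgrades it to a strict one. I would therefore emphasize that the alignment hypothesis is what guarantees the $ww$-edge is forward (so that it feeds $\xi(W_{k+1})$ rather than $\pi(W_{k+1})$), and that ``both commit under ESSN'' is used precisely to ensure $W_{k+1}$'s exclusion condition fails. The commit of $W_k$ is needed only so that $\pi(W_k)$ is well defined as a committed-graph quantity; the inequality itself rests entirely on $W_{k+1}$.
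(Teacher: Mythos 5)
Your proposal is correct and follows essentially the same route as the paper: both classify the $ww$-edge as forward via VO/KTO alignment, derive $\xi(W_{k+1}) \ge \pi(W_k)$ from the definition of $\xi$, and use the fact that $W_{k+1}$ passes the exclusion test; the paper merely packages the final step as a contradiction ($\pi(W_k)\ge\pi(W_{k+1})$ would force $\xi(W_{k+1})\ge\pi(W_{k+1})$ and hence an abort) whereas you write the direct chain $\pi(W_k)\le\xi(W_{k+1})<\pi(W_{k+1})$. Your closing remark correctly identifies that the strictness comes entirely from $W_{k+1}$'s surviving the test.
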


\begin{proof}
Because $W_k \xrightarrow{ww} W_{k+1}$ is a forward-edge (VO aligned with KTO), by definition
\[
  \xi(W_{k+1})
  = \max\{\;\pi(u) \mid u \xrightarrow{f} W_{k+1}\;\}
  \implies
  \xi(W_{k+1}) \ge \pi(W_k).
\]
If, to the contrary, $\pi(W_k) \ge \pi(W_{k+1})$, then $\xi(W_{k+1}) \ge \pi(W_{k+1})$, so the
exclusion condition $\pi \le \xi$ holds and $W_{k+1}$ would be aborted—contradiction.
Hence $\pi(W_k) < \pi(W_{k+1})$.
\end{proof}

\subsection{Metadata Layout}

As in SSN, ESSN maintains two kinds of state: \emph{transaction objects} and \emph{version objects} 
(see Table~\ref{tab:essn-meta}). A transaction object records the transaction's read set and write set; 
each version object stores compact per-version metadata.

\noindent\emph{Per-version state (guided by Lemma~\ref{lem:pi-monotone}).}
The strict monotonicity of $\pi$ along each version chain lets us store just three stamps per version:

\begin{itemize}
  \item \textbf{\texttt{sstamp} (successor minimum).}
  On the back-edge side, the \emph{smallest} $\pi$ that can constrain a reader of a version $v$ is the $\pi$ of the 
  \emph{next writer} that overwrites $v$; any later overwriter has strictly larger $\pi$ by Lemma~\ref{lem:pi-monotone}.
   Hence $v.\texttt{sstamp}$ is that next writer's $\pi$ (or $+\infty$ if none), 
   and readers compute their $\pi$ as the minimum of the read versions' \texttt{sstamp}s (together with $\sigma(t)$).

  \item \textbf{\texttt{crepi} ($=\pi(\text{creator})$).}
  For $ww$ dependencies, the \emph{only} predecessor that can affect a new writer's $\xi$ is the 
  \emph{immediate predecessor}; earlier writers have smaller $\pi$ and are dominated by the monotonicity of $\pi$.
  We record the creator's $\pi$ in each version as \texttt{crepi}; a writer reads \texttt{prev.crepi} at commit.
  \emph{For $wr$ dependencies}, the writer of the read version is a single forward predecessor, so the reader directly
  incorporates $\pi(\mathrm{creator}(u))=\texttt{u.crepi}$ into $\xi$ (no chain traversal).

  \item \textbf{\texttt{psstamp} (forward prefix maximum for $rw$).}
  Forward $rw$ contributions to $\xi$ arise from \emph{readers} of earlier versions on the chain; 
  because there may be many readers, what a future writer needs is the \emph{prefix maximum} over reader 
  $\pi$ up to the predecessor link.
  We maintain on each version $v$ a prefix high-water mark
  $v.\texttt{psstamp}=\max\{\pi(r)\}$ over all readers of ancestors up to $v$; a writer reads 
  \texttt{prev.psstamp} at commit.
\end{itemize}

\begin{table}[t]
\centering
\begin{tabular}{lll}
\toprule
Field & Description & Initial value\\
\midrule
\texttt{t.status}  & in-flight / committed / aborted & in-flight \\
\texttt{t.sstamp}  & $\pi(T)$: min over back-edge side & $\infty$ \\
\texttt{t.psstamp} & $\xi(T)$: max over forward predecessors & $-\infty$ \\
\texttt{t.reads}   & Read set & $\emptyset$ \\
\texttt{t.writes}  & Write set & $\emptyset$ \\
\midrule
\texttt{v.sstamp}  & next-writer's $\pi$ (back-propagated) & $\infty$ \\
\texttt{v.psstamp} & prefix max of readers' $\pi$ up to this link & $-\infty$ \\
\texttt{v.crepi}   & creator's $\pi$ (for $wr$/$ww$ direct-in) & $-\infty$ \\
\texttt{v.prev}    & pointer to the overwritten version & \textsc{null} \\
\bottomrule
\end{tabular}
\caption{ESSN metadata. Per-version \texttt{psstamp} is maintained via the \textit{prev} link
with previous-edge-only updates, so a writer consults only its immediate predecessor at commit.
\texttt{crepi} stores the creator's $\pi$ for $wr$/$ww$ direct-in.}
\label{tab:essn-meta}
\end{table}

\subsection{ESSN Read/Write (record only)}

During execution, the system only records the versions accessed: reads are added to the
transaction's read set and writes to the write set. No exclusion check or per-version update
is performed at read/write time. All reasoning and metadata updates are deferred to commit.

\subsection{ESSN Commit Protocol}
\begin{algorithm}[t]
\caption{ESSN Commit Protocol (previous-edge-only; no commit timestamp).}
\label{alg:essn-commit}
\begin{small}
\begin{algorithmic}[1]
\Function{essn\_commit}{$t$}
  \State $\sigma \gets \Call{next\_kto\_scalar}{}$        \Comment{e.g., commit KTO}
  \State $t.\texttt{sstamp} \gets \sigma$                 \Comment{seed $\pi(t)$}
  \State $t.\texttt{psstamp} \gets -\infty$                     \Comment{init $\xi(t)$}
  \Statex
  \ForAll{$u \in t.\texttt{reads}$}                       \Comment{reads: back $\pi$ \& wr direct-in}
    \State $t.\texttt{sstamp} \gets \min(t.\texttt{sstamp},\, u.\texttt{sstamp})$
    \State $t.\texttt{psstamp} \gets \max(t.\texttt{psstamp},\, u.\texttt{crepi})$
  \EndFor
  \Statex
  \ForAll{$v \in t.\texttt{writes}$}                      \Comment{writes: ww via $p.\mathrm{crepi}$, rw via $p.\mathrm{psstamp}$}
    \State $p \gets v.\texttt{prev}$
    \If{$p \neq \textsc{null}$}
      \State $t.\texttt{psstamp} \gets \max(t.\texttt{psstamp},\, p.\texttt{crepi})$    \Comment{ww}
      \State $t.\texttt{psstamp} \gets \max(t.\texttt{psstamp},\, p.\texttt{psstamp})$  \Comment{rw}
    \EndIf
  \EndFor
  \Statex
  \If{$t.\texttt{sstamp} \le t.\texttt{psstamp}$}         \Comment{single ESSN exclusion}
    \State \Call{abort}{$t$}
    \State \Return
  \EndIf
  \State $t.\texttt{status} \gets \textsc{Committed}$
  \Statex
  \ForAll{$v \in t.\texttt{writes}$}                      \Comment{install new versions \& propagate}
    \State $p \gets v.\texttt{prev}$
    \If{$p \neq \textsc{null}$}
      \State $p.\texttt{sstamp} \gets t.\texttt{sstamp}$  \Comment{back-propagate next-writer $\pi$}
    \EndIf
    \State $v.\texttt{crepi} \gets t.\texttt{sstamp}$     \Comment{creator's $\pi$ for future wr/ww}
    \If{$p = \textsc{null}$}
      \State $v.\texttt{psstamp} \gets -\infty$
    \Else
      \State $v.\texttt{psstamp} \gets p.\texttt{psstamp}$  \Comment{carry readers' prefix}
    \EndIf
  \EndFor
  \ForAll{$u \in t.\texttt{reads}$}
    \State $u.\texttt{psstamp} \gets \max(u.\texttt{psstamp},\, t.\texttt{sstamp})$ \Comment{register reader's $\pi$}
  \EndFor
\EndFunction
\end{algorithmic}
\end{small}
\end{algorithm}

As shown in Algorithm~\ref{alg:essn-commit}, the commit protocol has two phases for a
transaction object $t$: an \emph{evaluation} phase and a \emph{finalization} phase that
installs writes and propagates the required metadata. We write the read and write sets
as $\mathrm{reads}(t)$ and $\mathrm{writes}(t)$.

\paragraph{Evaluation Phase.}
We finalize $\pi(t)$ as
$\displaystyle \pi(t)=\min\!\bigl(\,\sigma(t),\;\min_{u\in \mathrm{reads}(t)} u.\texttt{sstamp}\bigr)$,
and we finalize $\xi(t)$ as
\[
\xi(t)
=\max\!\Bigl(
  \underbrace{\max_{u\in \mathrm{reads}(t)} u.\texttt{crepi}}_{\text{$wr$ direct input}}\;,
  \underbrace{\max_{v\in \mathrm{writes}(t)} v.\texttt{prev}.\texttt{crepi}}_{\text{$ww$ direct input}}\;,
  \underbrace{\max_{v\in \mathrm{writes}(t)} v.\texttt{prev}.\texttt{psstamp}}_{\text{$rw$ prefix}}
\Bigr).
\]
We then apply the exclusion rule: commit if and only if (\emph{iff}) $\pi(t)>\xi(t)$, otherwise abort.

\paragraph{Finalization Phase.}
For each new version $v$ produced by $t$, set
$v.\texttt{crepi}\!\leftarrow\!\pi(t)$ and
$v.\texttt{psstamp}\!\leftarrow\!v.\texttt{prev}.\texttt{psstamp}$,
and back-propagate $\pi(t)$ to $v.\texttt{prev}.\texttt{sstamp}$.
At $t$'s commit, for each read version $u\in \mathrm{reads}(t)$, update
$u.\texttt{psstamp}\!\leftarrow\!\max(u.\texttt{psstamp},\,\pi(t))$.
Consequently, a committing writer consults only the immediate predecessor
$p\!=\!v.\texttt{prev}$ on each written key.

\paragraph{Overhead.}
Reads and writes only record set membership; there are no per-version metadata updates and no early aborts.
At commit time, the protocol scans the read and write sets twice---once to compute the exclusion check 
and once to propagate updates: for each written version 
it updates the predecessor's \texttt{sstamp} (back-propagating $\pi$) and carries the forward bound $\xi$, 
while for each read version it updates the per-version KTO high-water mark. No chain walk is required.
The overall complexity is $O(\lvert\mathrm{reads}\rvert+\lvert\mathrm{writes}\rvert)$---on par with SSN---while keeping per-version objects compact via previous-edge-only maintenance.

\emph{Read-time shortcut (SSN-style).} If, at read time, the next-version writer (overwriter) of the read version has 
already committed, we \emph{copy} its $\pi$ into the reader's state and may drop that entry from the read set. 
This is an optimization only and does not affect correctness.

\subsection{Operational Implications of Begin-Ordered KTO under ESSN}
\label{subsec:begin-stalls}
One concern with a begin-ordered KTO is that it may introduce a commit-stall, specifically when the
KTO diverges from the commit order. Recall that $\pi(t_i)$ and $\xi(t_i)$ in the ESSN exclusion
conditions are defined with respect to transactions that precede $t_i$ in the KTO. For back-edges, this concerns
outgoing edges from $t_i$; for forward-edges, incoming edges to $t_i$---in both cases, the counterpart must be
earlier in the KTO. With a commit-ordered KTO, those earlier transactions have already committed, so no waiting
is required. With a begin-ordered KTO, however, some of these KTO-predecessors may not have committed yet,
forcing $t_i$ to wait until they finish so that its $\pi$ and $\xi$ can be finalized.

This situation is similar to 2V2PL~\cite{weikum2001-tis}, where each commit must wait
until the completion of any reading transactions, even though 2V2PL is commit-ordered.
As noted in the 2V2PL description, conventional 2PL is costly because reads and writes
block each other, causing frequent waits---especially for reads. By contrast, under 2V2PL
a read always observes the previous version and is therefore compatible with a concurrent
write; the cost is shifted to commit time, where writers wait for the reader to finish
(i.e., a commit-stall).

\paragraph{Operational notes on commit-stall.}
Both 2PL and 2V2PL rely on \emph{item-based locking} and thus require maintaining per-item lock chains
(though the burden is lighter for 2V2PL). In contrast, the \emph{commit-stall} used here maintains no lock
chains at all: it simply delays a later transaction until earlier-started ones (in the declared KTO) have
committed, so that stamps (e.g., $\pi,\xi$) are finalized in order. Deadlock detection is unnecessary,
because every wait follows the total order of KTO.

A downside is potential \emph{wasteful waiting}: a stalled transaction may already know that no conflict
capable of producing a dangerous structure will materialize. To mitigate this, short transactions can take an
\emph{early decision} (a ``stall-bypass'') when a local check certifies safety; additionally, prior declarations
or lightweight statistical sampling that infer long-transaction characteristics can reduce over-stalling
(e.g., uniform schedule sampling and nonparametric performance bounds as in \cite{cheng2024-towards}).

\paragraph{Example and a safe stall-bypass.}
Consider
\[
\mathtt{b_1\;\; b_2\; r_2(x_0)\; c_2\;\; w_1(x_1)\; c_1}\, .
\]
Under a begin-ordered KTO with $t_1$ (long) earliest, $t_2$ is short and read-only on $x$ in this schedule.
Even if $t_1$ later writes $x_1$, the only edge introduced with respect to $t_2$ is
$\,t_2 \xrightarrow{\mathrm{rw}(\mathrm{b})} t_1\,$, while its only incoming forward edge remains the baseline
$\,t_0 \xrightarrow{\mathrm{wr}(\mathrm{f})} t_2\,$ from the initial writer $t_0$.
Hence $t_2$ can safely commit \emph{without waiting} for $t_1$. In SSN/ESSN terms, since the exclusion condition
is $\pi(t)\!\le\!\eta(t)$, the safe case is the strict reverse. If
$\eta(t_2)$ (from $t_0$) is strictly less than $\pi(t_2)$ (via $t_1$), i.e., $\eta(t_2) < \pi(t_2)$,
then $t_2$ is admitted.

\medskip
\noindent\textit{Appearance order of version objects under a begin-ordered KTO.}
When we adopt a VO aligned with a begin-ordered KTO, commits may arrive out of
that order: a transaction that began earlier can commit after others. Without coordination,
this would \emph{logically} require inserting its version object between already-installed
versions in the physical chain.

The \emph{commit-stall} avoids such mid-chain insertions. By stalling only for transactions
that began earlier (and thus precede in KTO---and hence in VO), we ensure that
ww-related writes are materialized in KTO order. Consequently, the physical version chain
can be maintained \emph{append-only}---new versions always appear at the end of the chain---
and no re-linking or reordering of version objects is required.

\subsection{Anti-pivot structure inherent in a version chain}
\label{subsec:anti-pivot-structure}

\paragraph{Observation (anti-pivot structure from a single read).}
Let a transaction \(t_i\) read version \(x_j\) on key \(x\), written by \(t_j\).
Along the version chain ordered by \(\ll_v\) we denote successors of \(x_j\) by
\(x_{s_1} \ll_v x_{s_2} \ll_v \dots\) with writers \(t_{s_1}, t_{s_2}, \dots\).
By definition of MVSG, each successor writer induces an anti-dependency
\(t_i \xrightarrow{rw} t_{s_k}\).
Given a KTO, these edges partition into
\[
S^{-} \coloneqq \{\, t_{s_k} \mid t_{s_k} \ll_K t_i \,\} \quad(\text{back-edges}),\qquad
S^{+} \coloneqq \{\, t_{s_k} \mid t_i \ll_K t_{s_k} \,\} \quad(\text{forward-edges}).
\]
If both sets are non-empty, choose \(t_p\) and \(t_q\) as the KTO-minimal elements
of \(S^{-}\) and \(S^{+}\), respectively. Then
\(t_p \xleftarrow{\,b\,} t_i \xrightarrow{\,f\,} t_q\)
forms an \emph{anti-pivot structure} localized to key \(x\).
Operationally, a single read \(r_i(x_j)\) guarantees that all future versions on the
chain produce \(rw\)-edges out of \(t_i\);
whenever at least one of them lands before \(t_i\) in KTO and another lands after,
an anti-pivot structure arises deterministically.
In effect, even a read-only (RO) transaction can participate in
an anti-pivot structure, particularly under heavy contention on the same key.

\emph{Consequence for ESSN vs.\ SSN.}
For an anti-pivot $(t_p \xleftarrow{b} t_i \xrightarrow{f} t_q)$,
SSN may abort $t_q$ when $t_i$ contributes the largest $\eta$
among all predecessors and $\pi(t_q)\!\le\!\eta(t_q)$,
driven by a back-edge unrelated to the forward path.
Under ESSN, $\xi$ aggregates the $\pi$ values of forward predecessors;
if $\xi(t_q) < \pi(t_q)$, then $t_q$ is allowed to commit.

\paragraph{Example (anti-pivot structure).}
Consider the compact schedule
\[
m_2 = b_1\,w_1(x_1)\;\; b_2\,w_2(y_2)\;\; b_3\,r_3(x_0)\;\; c_1\;\; b_4\,r_4(y_0)\,w_4(x_4)\;\; c_2\;\; c_3\;\; c_4.
\]
Let KTO be the commit order, i.e., \(KTO: t_1 \prec t_2 \prec t_3 \prec t_4\).

This schedule induces the following MVSG edges:
\begin{itemize}
  \item \(t_1 \fedge t_4\) (ww on \(x\))
  \item \(t_3 \bedge t_1\) (rw on \(x\): \(r_3(x_0)\) then \(w_1(x_1)\))
  \item \(t_3 \fedge t_4\) (rw on \(x\): \(r_3(x_0)\) then \(w_4(x_4)\))
  \item \(t_4 \bedge t_2\) (rw on \(y\): \(r_4(y_0)\) then \(w_2(y_2)\))
\end{itemize}

On key \(x\), the version chain is \(x_0 \leftarrow w_1(x_1) \leftarrow w_4(x_4)\) with a read \(r_3(x_0)\).
By the observation in this section, the single read \(r_3(x_0)\) yields \(t_3 \xrightarrow{rw} t_1\) and
\(t_3 \xrightarrow{rw} t_4\); with \(KTO: t_1 \prec t_3 \prec t_4\), these classify as
\(t_1 \xleftarrow{b} t_3 \xrightarrow{f} t_4\), i.e., an anti-pivot structure localized to key \(x\).

\emph{ESSN vs.\ SSN.}
Here \(t_4\) also has a back-edge to \(t_2\) on key \(y\).
SSN may abort \(t_4\) when \(\pi(t_4)\!\le\!\eta(t_4)\), due to the combination of its own back-edge
(\(t_4 \bedge t_2\)) and metadata propagation across the pivot.
Under ESSN, forward-only propagation yields \(\xi(t_4) < \pi(t_4)\), so \(t_4\) commits.

\subsection{KTO in general under ESSN}
\label{subsec:kto-general}

\paragraph{KTO-insensitive case ($M_1$).}
Because the begin- and commit-ordered KTO coincides on $M_1$, swapping the KTO has
no effect on edge orientations in the MVSG.
\[
M_1 \;=\; b_1\,w_1(x_1)\;\; b_2\,w_2(y_2)\;\; b_3\,r_3(x_0)\;\; c_1\;\; b_4\,r_4(y_0)\;\; c_2\;\; r_3(z_0)\;\; c_3\;\; w_4(z_4)\;\; c_4.
\]
In such a \emph{KTO-insensitive} window, any observed difference stems solely from the exclusion test itself; 
in practice, \textbf{ESSN} strictly dominates \textbf{SSN} with fewer false exclusions.

\paragraph{KTO-sensitive windows (anti-pivot structure).}
Real workloads also contain \emph{KTO-sensitive} windows where begin and commit orders diverge.
Consider the anti-pivot schedule \(m_2\); move \(c_2\) after \(c_4\) so that
\(b_1 \prec b_2 \prec b_3 \prec b_4\) while \(c_1 \prec c_3 \prec c_4 \prec c_2\):
\[
m_2' \;=\; b_1\,w_1(x_1)\;\; b_2\,w_2(y_2)\;\; b_3\,r_3(x_0)\;\; c_1\;\; b_4\,r_4(y_0)\;\; w_4(x_4)\;\; c_3\;\; c_4\;\; c_2.
\]
We obtain the anti-pivot structure \(t_1 \xleftarrow{b} t_3 \xrightarrow{f} t_4\), which is present under both
begin- and commit-ordered KTOs; only \(t_2\)'s relative position flips between the two.
With \( \mathrm{KTO}=\text{begin} \), \textbf{ESSN} admits the schedule while \textbf{SSN} tends to exclude
(aborting \(t_4\)). With \( \mathrm{KTO}=\text{commit} \), both ESSN and SSN succeed.

This illustrates that practical workloads \emph{mix} KTO-insensitive windows like \(M_1\) with KTO-sensitive ones
like \(m_2'\). Because \textbf{ESSN} is defined over the declarative triple
\((\mathrm{VF}, \mathrm{VO}, \mathrm{KTO})\), one can instantiate \(\mathrm{KTO}\) to match the workload at hand
(see \S\ref{sec:discussion}), while applying the same ESSN exclusion test unchanged.

\FloatBarrier

\section{Experiments and Evaluation}
\label{sec:experiments}
This section empirically evaluates ESSN against SSN with a focus on the abort rate of long
transactions. We first formalize what we call a long transaction and the evaluation metric. 
We then discuss the effect of KTO choice (begin vs. commit) on long aborts, which clarifies 
why subsequent experiments concentrate on commit-ordered KTO. Following this, we detail the single-threaded, 
reproducible history generator and checker used in our experiments, followed by the mixed workload 
design and key parameters. 
Finally, we present results under different read-from (RF) policies, which determine the VF, 
and explain when and why ESSN yields gains over SSN.
We conclude this section with a begin-ordered case study
(\Cref{subsec:mixed-begin}) that formalizes why a long never aborts
and how commit-stall aligns batch completion.

\subsection{What is a ``Long'' Transaction and How We Measure Abort Rate}
We call a transaction \emph{long} if its execution interval strictly contains at least one complete short transaction; 
e.g., in the schedule $m = b_1\, b_i\, c_i\, c_1$, $t_1$ is long and $\{t_i\}$ are short. 
In real workloads, a long transaction often contains multiple short transactions.

We measure the \emph{abort rate of a long transaction} as the ratio of the number of aborts to the number of trials 
of that specific long transaction.

\subsection{Effect of KTO on Long-Transaction Abort}
We begin by considering a case where several short transactions performing writes interleave within a long transaction:
\[
m_3 = b_1\ r_1(x_0)\ b_i\ w_i(x_i)\ c_i\ c_1
\]
In this case, the long $t_1$ aborts under OCC~\cite{kung1981-occ}, because a back-edge $t_1 \xrightarrow{rw} t_i$ 
is created.

Next, consider a case where both the long and the shorts are read-only, except that at the end the long performs 
a write that hits one of the keys read by a short:
\[
m_4 = b_1\ b_i\ r_i(x_0)\ c_i\ w_1(x_1)\ c_1
\]
Under MVTO, the long $t_1$ aborts in this schedule. With begin-ordered KTO, a back-edge $t_i \xrightarrow{rw} t_1$ arises, 
and MVTO sacrifices the write. Thus, even in a workload dominated by reads, selecting begin-ordered KTO can cause the long 
to abort. In essence, both OCC and MVTO are unfavorable to long transactions, because the relative begin/commit order 
between shorts and longs is constrained in a single direction. 
This is precisely where exclusion conditions that tolerate multiple back-edges, such as SSN or ESSN, become effective. 
For both $m_3$ and $m_4$, neither SSN nor ESSN aborts. 

However, in $m_3$, a long may read versions already committed by shorts, which can trigger SSN/ESSN exclusions; 
the situation is therefore more subtle. We revisit these cases in the experimental evaluation, where the difference 
between SSN and ESSN under $KTO{=}\text{commit}$ becomes evident.

\subsection{Experimental Setup}
\label{experimental_setup}
\paragraph{Reproducible single-threaded framework.}
We implement two components in Python~3 and run them in a notebook environment:
(i) a \emph{generator} that produces a multiversion schedule $m$ (begin/read/write/commit operations) 
according to a chosen RF policy, and
(ii) a \emph{checker} that, given a KTO and a VO aligned with the chosen KTO 
(begin- or commit-ordered), evaluates both SSN and ESSN exclusion rules.

\paragraph{History generator.}
First, we synthesize transaction histories that represent a controllable load pattern, 
and then materialize a multiversion schedule using common \emph{read-from (RF) policies}.
We consider three RF policies with respect to their alignment with begin- and commit-ordered KTOs (described below), 
and we implement the two that cover commit-KTO.

\begin{itemize}
  \item \texttt{as\_of\_read\_commit}: reads the latest \emph{committed} version at each read.
        \emph{Aligns with commit-KTO; may violate begin-KTO.}
        For example, in the schedule $b_1\, b_2\, \dots\, c_2\, c_1\, b_3$, a read by $t_1$ after $c_2$ may 
        observe $t_2$'s committed version, yielding a wr-edge $t_2\!\to\! t_1$ that is \emph{backward} under begin-KTO.

  \item \texttt{nearest\_begin\_kto}: reads from the nearest predecessor under begin-KTO (MVTO-like).
        \emph{Aligns with begin-KTO; may violate commit-KTO.}
        For example, in a similar schedule\\ $b_1\, b_2\, \dots\, c_2\, c_1\, b_3$, a read by $t_2$ after $b_1$ may 
        observe $t_1$'s write, yielding a wr-edge $t_1\!\to\! t_2$ that is \emph{backward} under commit-KTO.

  \item \texttt{snapshot\_at\_begin}: reads the latest committed version as of the reader's begin.
        \emph{Aligns with both begin- and commit-KTO.}
        Because the chosen writer both \emph{begins} and \emph{commits} before the reader's begin, 
        wr-edges are forward under either choice of KTO. 
\end{itemize}

\paragraph{History checker.}
Given a fixed KTO, the checker evaluates the SSN and ESSN exclusion tests at commit time under a per-item VO 
aligned with that KTO, deferring decisions as needed until dependent transactions have committed.
It does not model early aborts.
Our experimental implementation maintains per-key high-water marks for $\eta$ and $\xi$,
in place of the \emph{previous-edge-only} maintenance described in \S\ref{sec:implementation}.
Under single-threaded execution with commit-stalling, the two designs are equivalent:
the latest writer on a key observes the same $\eta/\xi$ values.
A discrepancy could arise only if a per-key high-water mark were to exceed the corresponding
prefix maximum; this cannot occur, because under the stall rule any reader whose position
in the KTO is later than the writer's (i.e., $\sigma(r)>\sigma(w)$) must wait for that
writer to commit before recording the read in the per-key high-water mark.

\subsection{Workload Design and Parameters}
\label{subsec:workload-design}
As illustrated in Fig.~\ref{fig:exp-scenario}, under commit-KTO with the \texttt{as\_of\_read\_commit} RF policy, 
the relative commit position of a short 
transaction determines whether its dependency with the long read-only $t_{long}$
is classified as a forward-edge ($t_{s2}$, $t_{s3}$) or a back-edge ($t_{s1}$, $t_{s4}$). 
This observation motivates our focus on commit-KTO in the experiments, 
as it directly influences whether SSN or ESSN identifies an exclusion violation.

\begin{figure}[t]
  \centering
  \includegraphics[width=0.75\linewidth]{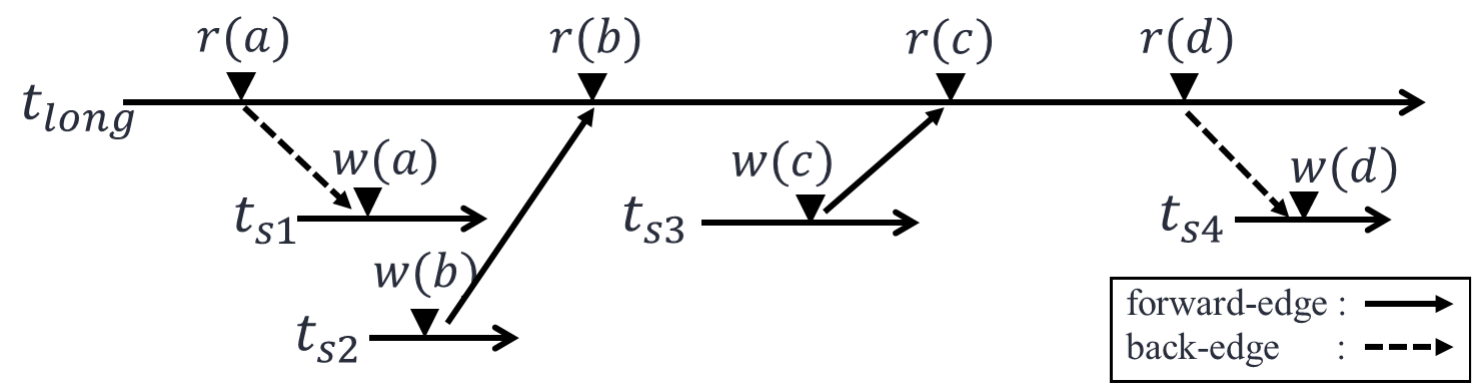}
  \caption{Illustration of dependency classification under commit-ordered KTO with 
  the RF policy \texttt{as\_of\_read\_commit}. 
  The relative commit position of short transactions determines whether the edge between the long 
  transaction $t_{long}$ and the short transaction becomes a forward-edge ($t_{s2}$, $t_{s3}$) or a back-edge ($t_{s1}$, $t_{s4}$).}
  \label{fig:exp-scenario}
\end{figure}

We use a mixed workload with two longs and many shorts:
\begin{itemize}
  \item $t_1$: read-only long (also reads a special item $z$),
  \item $t_2$: long that is mostly reads but performs a small final write to $z$ (e.g., summary/control table separate from $t_1$/$t_2$'s reads),
  \item $t_k$: write-only shorts that update random items except $z$. Shorts contend with each other; histories ensure some short starts within every $[b_k, c_k]$ interval.
\end{itemize}

Unless otherwise stated, we use:
\[
\texttt{N\_KEYS}=200,\quad
\texttt{READ\_SIZE}=40,\quad
\texttt{N\_SHORTS}=60,\quad
\texttt{REPEATS}=50,
\]
\[
\texttt{PIVOT\_PROBS}=\{0.0,0.2,0.5,0.8,1.0\},\quad
\texttt{SHORT\_HIT\_PROBS}=\{0.0,0.2,0.5,0.8,1.0\}.
\]

Here \texttt{PIVOT\_PROBS} controls the probability that $t_2$'s final write to $z$ overwrites a version read by $t_1$; 
\texttt{SHORT\_HIT\_PROBS} controls the probability that a short writes an item read by a long.
The generator creates shorts, embeds $t_1$/$t_2$, and then applies the selected RF policy to obtain 
a multiversion schedule. Each long starts after the short phase has begun, interleaves with shorts, 
and commits only after the last short that wrote any key in its read-set has committed.
We use the same generated multiversion schedule to evaluate both SSN and ESSN.

\subsection{Results}

\paragraph{R1: Average trends (Fig.~\ref{fig:bar_kto_commit}).}
Under a commit-ordered KTO, ESSN consistently lowers the long transaction's abort rate relative to SSN.
The effect is most pronounced with \texttt{snapshot\allowbreak\_at\allowbreak\_begin} reads, where the absolute abort rate of $t_2$ is low 
but the \emph{gap} between SSN and ESSN is large (in our runs, up to a 50\% relative reduction: $0.20 \to 0.10$).
With \texttt{as\allowbreak\_of\allowbreak\_read\allowbreak\_commit}, the absolute abort rate is higher and the ESSN--SSN gap narrows.
Fig.~\ref{fig:bar_kto_commit} summarizes these average trends.

\begin{figure}[t]
  \centering
  \includegraphics[width=0.6\linewidth]{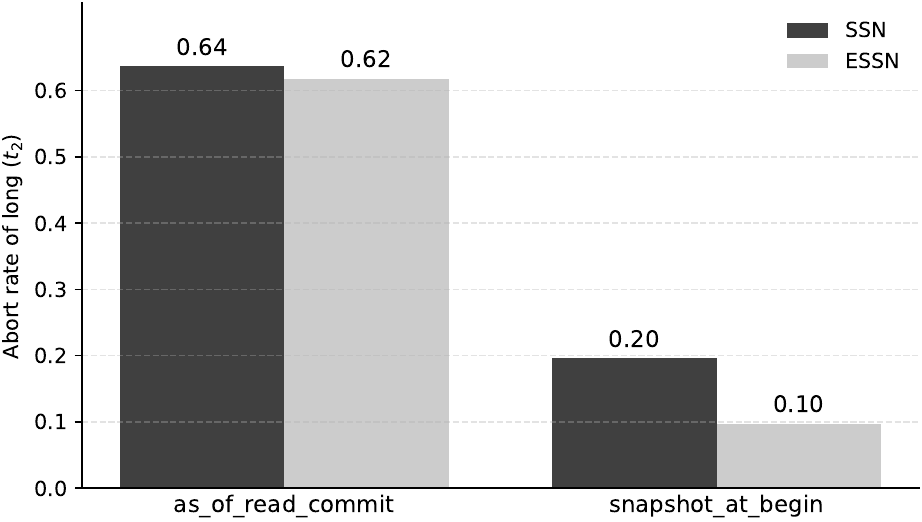}
\caption{Abort rate of the long transaction ($t_2$) under commit-ordered KTO across RF policies.
ESSN lowers the average abort rate versus SSN; the gap is largest with \texttt{snapshot\_at\_begin}.}
  \label{fig:bar_kto_commit}
\end{figure}

\paragraph{R2: Where the gains arise (Fig.~\ref{fig:gain_contours}).}
Fig.~\ref{fig:gain_contours} drills down by sweeping \texttt{PIVOT\_PROB} 
(probability that the long's final write hits a key it previously read) and \texttt{SHORT\_HIT\_PROB} 
(probability that shorts write keys read by the long).
The heat encodes the \emph{absolute} gain (SSN$-$ESSN), while the contour lines plot the abort rate of $t_2$ \emph{under SSN}.
Two patterns emerge:
(i) with \texttt{as\_of\_read\_commit}, contours run roughly horizontal, indicating sensitivity to \texttt{SHORT\_HIT\_PROB} 
and relatively weak dependence on \texttt{PIVOT\_PROB};
(ii) with \texttt{snapshot\_at\_begin}, vertical banding appears, indicating stronger dependence on \texttt{PIVOT\_PROB} 
via interactions with $t_1$. The largest absolute gap on the grid is about $0.25$, 
whereas the bar chart average in Fig.~\ref{fig:bar_kto_commit} is about $0.10$.

\begin{figure}[t]
  \centering
  \begin{subfigure}{0.49\linewidth}
    \includegraphics[width=\linewidth]{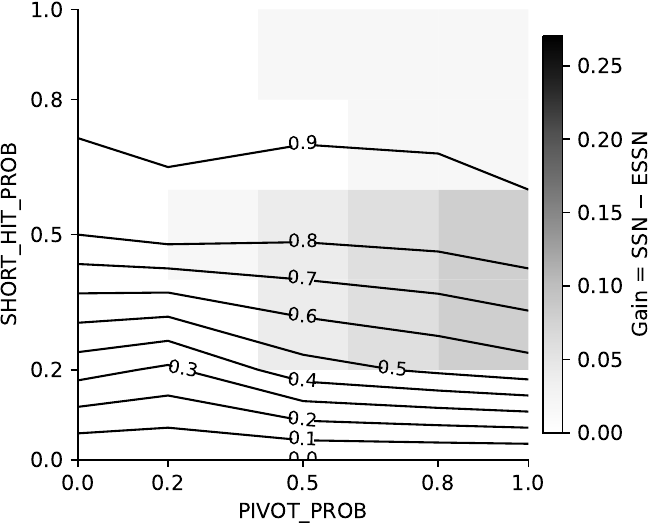}
    \caption{RF=\texttt{as\_of\_read\_commit}.}
    \label{fig:gain_contours:commit}
  \end{subfigure}
  \hfill
  \begin{subfigure}{0.49\linewidth}
    \includegraphics[width=\linewidth]{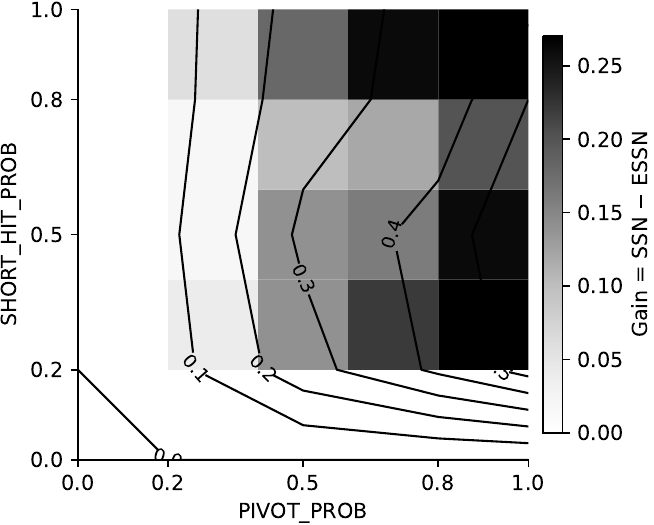}
    \caption{RF=\texttt{snapshot\_at\_begin}.}
    \label{fig:gain_contours:snapshot}
  \end{subfigure}
  \caption{Heatmap = ESSN gain (SSN$-$ESSN), \emph{contour lines = $t_2$ abort rate under SSN}. Grid over (\texttt{PIVOT\_PROB}, \texttt{SHORT\_HIT\_PROB}), commit-ordered KTO.}
  \label{fig:gain_contours}
\end{figure}

\paragraph{R3: Mechanistic explanation via the ESSN exclusion test.}
Recall that SSN aborts when $\pi(t)\!\le\!\eta(t)$, whereas ESSN aborts when $\pi(t)\!\le\!\xi(t)$.
For a long $t_2$, $\pi(t_2)$ is a \emph{minimum} over successors reachable via back-edges and stabilizes once established.
In contrast, $\xi(t_2)$ is a \emph{maximum} of $\pi(\cdot)$ over forward predecessors; it \emph{increases} 
as additional forward predecessors appear, which weakens ESSN's advantage 
(a larger right-hand side makes the inequality easier to satisfy).
Under \texttt{snapshot\_at\_begin}, short transactions that overwrite keys read by $t_2$ create back-edges not only to 
$t_2$ but also to $t_1$.
As these back-edges accumulate, they depress $\pi(t_1)$.
When $t_1$ later becomes a forward predecessor of $t_2$ ($t_1 \xrightarrow{f} t_2$), its small $\pi(t_1)$ enters
\[
\xi(t_2)=\max\{\pi(u)\mid u \xrightarrow{f} t_2\},
\]
keeping $\xi(t_2)$ low (often below $\eta(t_2)$), thereby reducing false positives and 
widening the ESSN--SSN gap observed with \texttt{snapshot\_at\_begin}.

\subsection{Mixed workload under begin-ordered KTO}
\label{subsec:mixed-begin}
\paragraph{Example (SSN/ESSN abort under commit-aligned order).}
Consider the schedule where $t_1$ is a long transaction and $t_2,t_3$ are short transactions:
\[
\mathtt{b_1\, b_2\, b_3\, r_3(y_0)\, r_1(x_0)\, w_2(x_2)\, c_2\, w_1(y_1)\, c_3\, c_1}\, .
\]
The MVSG edges are:
(i) $t_3 \to t_1$ on $y$ (read then later write), which is \emph{forward} under commit ordered KTO 
since $c_3 \prec c_1$;
(ii) $t_1 \to t_2$ on $x$ (read then later write), which is a \emph{back}-edge since $c_2 \prec c_1$.
Thus $t_1$ has an incoming forward and an outgoing back simultaneously; the SSN exclusion
($\pi(t_1)\!\le\!\eta(t_1)$) triggers, and SSN aborts $t_1$.
Note that with begin-ordered KTO, these orientations swap
(\emph{in-back} and \emph{out-forward} at $t_1$), so the pivot does not materialize and the same schedule
would be admitted.

\begin{lemma}[Long never aborts under begin-ordered ESSN/SSN]
\label{lem:begin-ordered}
Consider a mixed workload with one long transaction $t_\ell$ and many short transactions $\{t_i\}$.
Assume a KTO aligned with \emph{begin} times. Then under both ESSN and SSN, $t_\ell$ never aborts.
\end{lemma}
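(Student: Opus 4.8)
The plan is to reduce the claim to one structural observation: under a begin-ordered KTO the long transaction is the $\ll_K$-minimal node of the MVSG, and a $\ll_K$-minimal node can satisfy neither exclusion condition. First I would fix $\sigma(t)=b(t)$, so that $t_i\ll_K t_j \iff b(t_i)<b(t_j)$ and the forward/back label of any edge $t_i\to t_j$ is decided solely by comparing the begin times of its endpoints. By the strict-interval-containment definition of a Long we have $b(t_\ell)\prec b(s)$ for every contained short $s$; in the one-long mixed workload every short lies inside $[b(t_\ell),c(t_\ell)]$, so $b(t_\ell)$ is the least begin time in the batch and $t_\ell$ is $\ll_K$-minimal.

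From minimality I would extract two orientation facts by the labelling rule alone, independently of edge type, of which items are touched, or of how many shorts conflict. An outgoing edge $t_\ell\to u$ is a back-edge only if $u\ll_K t_\ell$, i.e.\ $b(u)<b(t_\ell)$, which is impossible for the minimal node; hence $t_\ell$ has \emph{no outgoing back-edge} and the reachable set $\{u : t_\ell \xrightarrow{b*} u\}$ is empty. Symmetrically, an incoming edge $u\to t_\ell$ is forward only if $u\ll_K t_\ell$, again impossible; hence $t_\ell$ has \emph{no incoming forward-edge} and the forward-predecessor set $\{u : u\xrightarrow{f} t_\ell\}$ is empty.

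I would then read off the stamps directly. With an empty back-reachable set,
\[
\pi(t_\ell)=\min\bigl(\{\sigma(u)\mid t_\ell \xrightarrow{b*} u\}\cup\{\sigma(t_\ell)\}\bigr)=\sigma(t_\ell)=b(t_\ell),
\]
which is finite, and with an empty forward-predecessor set both forward-side maxima fall back to their defaults,
\[
\xi(t_\ell)=\max\bigl(\{\pi(u)\mid u\xrightarrow{f}t_\ell\}\cup\{-\infty\}\bigr)=-\infty,\qquad \eta(t_\ell)=-\infty.
\]
Because the forward-predecessor set is empty, $\eta(t_\ell)=-\infty$ holds regardless of whether SSN scores forward predecessors by $c(\cdot)$ or by $\sigma(\cdot)$, so both protocols are handled uniformly. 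Since $\pi(t_\ell)=b(t_\ell)>-\infty$, neither $\pi(t_\ell)\le\xi(t_\ell)$ (ESSN) nor $\pi(t_\ell)\le\eta(t_\ell)$ (SSN) can hold, so $t_\ell$ commits in every run; the commit-stall cannot interfere, since $t_\ell$ has no KTO-predecessor to wait on and its stamps are finalized from $\sigma(t_\ell)$ alone.

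The step I expect to demand the most care is the minimality claim itself: I must argue from the containment definition, not from the figure, that no transaction of the batch begins before $t_\ell$. It is worth isolating what each protocol actually needs. For ESSN the conclusion is robust, since every forward predecessor obeys $\pi(u)\le\sigma(u)<\sigma(t_\ell)$, giving $\xi(t_\ell)<\sigma(t_\ell)$ as soon as $t_\ell$ merely carries no outgoing back-edge. For SSN, however, $\eta(\cdot)$ may be scored by commit time, and a forward predecessor that commits after $b(t_\ell)$ could push $\eta(t_\ell)$ above $\pi(t_\ell)$; the clean uniform statement therefore leans on the stronger fact that $t_\ell$ has \emph{no incoming forward-edge at all}, so that the forward-predecessor set is empty and $\eta(t_\ell)=-\infty$ independently of the scoring. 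Pinning down that emptiness—equivalently, that $t_\ell$ is globally begin-minimal—is thus the true load-bearing hypothesis, and is where I would concentrate the rigor.
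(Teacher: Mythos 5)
Your proposal is correct and takes essentially the same route as the paper's proof sketch: both arguments rest on $t_\ell$ being earliest in the begin-ordered KTO, so that it can have no outgoing back-edge (and, as you add, no incoming forward-edge), which makes the exclusion conditions $\pi(t_\ell)\le\eta(t_\ell)$ and $\pi(t_\ell)\le\xi(t_\ell)$ unsatisfiable. You are somewhat more careful than the paper—explicitly computing $\pi(t_\ell)=\sigma(t_\ell)$ and $\xi(t_\ell)=\eta(t_\ell)=-\infty$, and correctly flagging that global begin-minimality of $t_\ell$ is the load-bearing hypothesis, which the paper does not prove from the containment definition either but instead enforces operationally via the ``Priority control'' restart rule stated immediately after the lemma.
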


\begin{proof}[Proof sketch]
Under begin-ordered KTO, $t_\ell$ is earliest in the KTO.
For an \emph{rw} dependency between $t_\ell$ and a short $t$:
(i) if $t_\ell$ reads an item and $t$ later writes it, we get $t_\ell \!\to\! t$, which is a \emph{forward}-edge;
(ii) if $t$ reads an item and $t_\ell$ later writes it, we get $t \!\to\! t_\ell$, which is a \emph{back}-edge.
Hence every incoming \emph{rw} edge to $t_\ell$ is back and every outgoing \emph{rw} edge from $t_\ell$ is forward.
Since the SSN/ESSN pivot requires an outgoing back-edge from the pivot transaction (indeed, an incoming
forward \emph{and} an outgoing back), and $t_\ell$ has no outgoing back-edge under begin-ordered KTO,
$t_\ell$ never aborts under both ESSN and SSN.
\end{proof}

\paragraph{Priority control}
To ensure that the long transaction starts first (so that $\sigma(t_\ell) < \sigma(t_i)$ holds for all shorts),
it suffices to \emph{restart} any short transaction that happens to be active when the long begins.
This operationalizes the begin-ordered KTO assumption for mixed workloads.

\paragraph{Batch completion with commit-stall}
If we execute the long/short mix as a batch, the KPI is the completion time of the whole batch.
Under ESSN with begin-ordered KTO, \emph{commit-stall} waits are introduced to finalize
the stamps (e.g., $\pi$ and $\xi$) consistently with the KTO; in practice this means that short transactions
\emph{do wait}--not merely locally per key--until the necessary earlier commits become visible so that
their $\pi$/$\xi$ values are determined unambiguously.
Since (by Lemma~\ref{lem:begin-ordered}) $t_\ell$ never aborts, and all shorts ultimately wait to
synchronize their stamps with $t_\ell$'s earlier effects, the completion of all transactions effectively
aligns at $t_{\mathrm{commit}}(t_\ell)$ (shorts may retry or experience stalls, but the batch finish time
coalesces to $t_{\mathrm{commit}}(t_\ell)$; this is the begin-ordered counterpart of \S\ref{subsec:begin-stalls}).

\paragraph{Commit-stall suppresses the read-only anomaly.}
Replacing the short $t_3$ in the preceding example with a read-only $t_4$, consider the input trace
\[
\mathtt{b_1\, b_2\, b_4\, r_1(x{?})\, w_2(x_2)\, c_2\, r_4(x{?})\, r_4(y{?})\, c_4\, w_1(y_1)\, c_1}\, .
\]
With $\mathrm{VF}=\texttt{snapshot\_at\_begin}$, the multiversion schedule becomes
\[
\mathtt{b_1\, b_2\, b_4\, r_1(x_0)\, w_2(x_2)\, c_2\, r_4(x_2)\, r_4(y_0)\, c_4\, w_1(y_1)\, c_1}\, ,
\]
which induces a cycle $t_2 \xrightarrow{\mathrm{wr}} t_4 \xrightarrow{\mathrm{rw}} t_1 \xrightarrow{\mathrm{rw}} t_2$
---a classic read-only anomaly.
Under \emph{commit-stall} aligned with a begin-ordered KTO, stamps for the \emph{writer} $t_2$ are finalized
only after earlier effects become visible in order, whereas the read-only $t_4$ may \emph{bypass} the stall
(safe snapshot) and commit immediately. In our setting this resolves to:
\[
\mathtt{b_1\, b_2\, b_4\, r_1(x_0)\, w_2(x_2)\, r_4(x_0)\, r_4(y_0)\, c_4\, w_1(y_1)\, c_1\, c_2}\, ,
\]
which is MVSR with edges (ignoring the baseline $\mathrm{wr}$ from the initial version)
$\,t_1 \xrightarrow{\mathrm{rw}(f)} t_2,\;
t_4 \xrightarrow{\mathrm{rw}(b)} t_1,\;
t_4 \xrightarrow{\mathrm{rw}(b)} t_2$.
No cycle (or ESSN/SSN pivot) arises, so the schedule is admitted.

\paragraph{Relation to schedule-first enforcement.}
A recent system-level approach, MVSchedO~\cite{cheng2024-towards},
embodies a similar principle of begin-ordered alignment.
While ESSN formalizes the invariant that the earliest (long) transaction
never aborts because no outgoing back-edge can arise,
MVSchedO enforces an analogous ordering operationally:
transactions receive timestamps according to a precomputed schedule,
and later conflicting operations on hot keys are delayed
until earlier ones complete.

\section{Related Work}
\label{sec:related-work}

Concurrency control (CC) mechanisms are central to database systems, with an evolving focus on balancing serializability,
 performance, and implementation complexity. Among the many approaches, the handling of \emph{dependency cycles}---especially
  \emph{back-edges} in the serialization graph---has become a key axis for protocol comparison.

\subsection{Traditional Approaches}

\textbf{Optimistic Concurrency Control (OCC)}~\cite{kung1981-occ} and
 \textbf{Multiversion Timestamp Ordering (MVTO)}~\cite{weikum2001-tis} both assume a strict ordering
  based on timestamps. These protocols disallow back-edges and abort transactions that violate their predefined
   execution order, thereby ensuring serializability through re-execution.

\noindent\textbf{Multiversion Concurrency Control (MVCC)}, as implemented in systems like PostgreSQL, supports SI
 but can still admit non-serializable schedules unless extended with additional validation logic.

\subsection{Graph-based Protocols}

\textbf{Serializable Snapshot Isolation (SSI)}~\cite{fekete2005-ssi} introduced dependency tracking to detect cycles
 and anomalies. 

\noindent\textbf{Serial Safety Net (SSN)}~\cite{wang2017-ems} refined this idea by introducing \emph{exclusion windows}
 to efficiently prevent cycles in the serialization graph, allowing certain back-edges under carefully verified
  conditions. 

\noindent\textbf{Extended SSN (ESSN)} further reduces false positives while retaining lightweight global 
dependency tracking.

\subsection{Pragmatic Approaches}

\textbf{Cicada}~\cite{lim2017-cicada} adopts a pragmatic approach by combining MVCC with timestamp ordering, where timestamps are assigned at transaction start using loosely synchronized clocks.

\noindent Unlike TicToc, Cicada does not perform timestamp reordering; 
instead, it adheres to the MVTO-style principle of assigning fixed timestamps 
and aborting transactions whose access patterns violate version consistency. 
It tolerates temporal back-edges arising from anti-dependencies, but detects 
them from both the reader and writer sides at commit time, thereby preventing 
serialization cycles.

\noindent\textbf{Tsurugi/Shirakami}~\cite{kambayashi2023-tsurugi,kambayashi2023-shirakami} is a modern system
that adopts a begin-timestamp-based CC module (\emph{Shirakami}). From a graph-based perspective,
Shirakami corresponds to an approach that \emph{permits multiple back-edges} (rw anti-dependencies) while enforcing 
safety under begin-time ordering.
To avoid anomalies, Shirakami incorporates mechanisms such as \emph{order forwarding} and \emph{boundary alignment}.

\subsection{Dynamic Timestamp Protocols}

Recent protocols such as \textbf{TicToc}~\cite{yu2016-tictoc} allow dynamic adjustment of commit timestamps to resolve serialization anomalies. 
TicToc separates read and write timestamps for each data item and resolves dependencies dynamically at commit-time.
This mechanism allows the scheduler to resolve potential back-edges by \emph{timestamp reordering}, such that the final serialization order aligns with a legal commit order.
Although back-edges may appear during execution, they are neutralized by adjusting the commit timestamps, ensuring that the committed schedule is strictly serializable.

\subsection{Comparative Analysis}

A comprehensive empirical evaluation of major protocols is provided by CCBench~\cite{tanabe2020-ccbench}, 
which benchmarks OCC, MVTO, TicToc, SSN, Cicada, and others under unified workloads.
Table~\ref{tab:cc_comparison} complements those measurements by contrasting each protocol's stance on 
\emph{back-edges} and its \emph{timestamp/KTO basis}.
As seen in Table~\ref{tab:cc_comparison}, strict timestamp protocols preclude back-edges, graph-based checkers 
allow them with exclusion rules, and dynamic schemes adjust timestamps to neutralize them at commit.

\begin{table}[t]
  \centering
  \caption{Comparison of CC protocols by back-edge policy and timestamp/KTO basis.}
  \label{tab:cc_comparison}
  \begin{tabular}{lcc}
    \toprule
    \textbf{Protocol} & \textbf{Back-edge policy} & \textbf{Timestamp / KTO basis} \\
    \midrule
    OCC       & Not permitted          & Commit-time \\
    MVTO      & Not permitted          & Begin-time \\
    SSI       & Conditional            & Commit-time \\
    SSN       & Permitted              & Commit-time \\
    ESSN      & Permitted              & \textbf{Declared (KTO)} \\
    Shirakami & Permitted              & Begin-time \\
    Cicada    & Aborted if violated    & Begin-time (loosely synced) \\
    TicToc    & Neutralized by reorder & \textbf{Dynamic} \\
    \bottomrule
  \end{tabular}
\end{table}

\subsection{TicToc vs.\ MVCC (MVTO; SI with SSN/ESSN)}
\label{sec:tictoc-vs-mvcc}

TicToc represents a \emph{single-version} concurrency-control protocol:
it lacks multiversion visibility and effectively corresponds to a committed-read VF 
(\texttt{as\_of\_read\_commit}). Consequently, schedules that are multiversion-serializable (MVSR)
but not view-serializable (VSR) cannot be admitted by TicToc.
For example, even simple write-after-read or two-key read-skew patterns are rejected under single-version reasoning, yet they are safely admitted under MVSR (e.g., with $\mathrm{VF}=\texttt{nearest\_begin\_kto}$ or $\texttt{snapshot\_at\_begin}$).
Formal proofs for these cases are provided in App.~\ref{app:vsr-proofs}.
Furthermore, certain mixed long/short workloads (\S\ref{subsec:workload-design}) exhibit random read orders that can realize conflicting patterns, forcing aborts under $\mathrm{VF}=\texttt{as\_of\_read\_commit}$ but not under $\texttt{snapshot\_at\_begin}$ (Fig.~\ref{fig:bar_kto_commit}).
This demonstrates the conceptual gap between 
TicToc's single-version timestamp reordering and ESSN's
lightweight, KTO-aligned multiversion generalization.

\subsection{SSI vs.\ SSN vs.\ ESSN}
\S\ref{sec:ssn-limitations} establishes that
SSN strictly subsumes SSI under SI:
every SSN abort is also an SSI abort, but not vice versa.
Furthermore, \S\ref{sec:essn} proves that ESSN subsumes SSN.
Taken together, these results yield a strict hierarchy 
in acceptance-set terms under SI:
\[
A_{\text{ESSN}} \supset A_{\text{SSN}} \supset A_{\text{SSI}}.
\]
This hierarchy highlights that ESSN generalizes the SSN exclusion test while also capturing
SSI's dangerous-structure condition via a declared KTO.
In addition, SSI is limited to SI, and SSN is tied to a commit-ordered KTO,
whereas ESSN demonstrates broader acceptance power under any KTO-aligned criterion.

\section{Discussion}
\label{sec:discussion}

\subsection{Dual VF: Split Version Functions under a Single KTO}
\label{subsec:dual-vf}

\paragraph{Motivation.}
Prior work observes that MVCC read latency grows with the number of hops needed to
reach a snapshot-visible version on a per-key chain; each hop follows a \texttt{prev}
pointer and tends to miss caches, so even a few extra hops accumulate under high update
rates and parallelism~\cite{wu2017-mvcc-eval,lim2017-cicada,tanabe2020-ccbench}.

\paragraph{Design (within ESSN).}
Within our ESSN criterion (declared KTO; VF/VO aligned; prev-edge-only propagation),
we split VFs by workload role without introducing new assumptions:
(i) short transactions use a \emph{head-reading} VF (e.g., MVTO-style), so a read starts at the chain head
and returns the first visible version (often zero hops under a simple stall rule for later overwriters),
while (ii) long transactions use a snapshot VF certified by ESSN.
This keeps short reads near the head while preserving serializability for long ones.

\paragraph{Commit-time complexity.}
Because bounds are aggregated at read time (copy-up of successor/forward stamps) and writers
touch only their immediate predecessor, the ESSN exclusion check remains commit-only and
$O(|R|{+}|W|)$---often near $O(|W|)$ when most reads are dropped eagerly. No chain scan is required
during commit.

\paragraph{Takeaway.}
Structuring VFs so that most reads land at (or close to) the chain head cuts the dominant
per-read traversal cost, while a single global order and the ESSN exclusion test maintain
serializability even in mixed short/long workloads.

\paragraph{Relation to static robustness analysis.}
Our dual-VF design conceptually aligns with recent static robustness work that formalizes
mixed allocations over $\{\mathrm{RC}, \mathrm{SI}, \mathrm{SSI}\}$ under a commit-ordered
VO and corresponding VFs---RC reads relative to each operation and SI reads relative to
\texttt{first(T)}---and investigates when such mixed workloads remain serializable.
While orthogonal to ESSN's runtime safety net, these results provide a complementary,
offline viewpoint on mixed-isolation semantics similar to ours~\cite{Vandevoort2025-mvsplit},
and further suggest compatibility with begin-ordered reasoning and the dual-KTO model.

\subsection{Dual KTO: Leveraging Begin- and Commit-Ordered Reasoning}
\label{subsec:dual-kto}

\paragraph{Motivation.}
Beyond version functions, ESSN can also leverage flexibility in the choice of serialization order itself.
So far we have analyzed begin-ordered and commit-ordered KTOs separately.
We are aware of at least one practical CC mechanism that implicitly satisfies both at once:
\emph{snapshot isolation} (SI). SI fixes VF to ``begin=commit'' and also enforces
VO$=$begin$=$commit by disallowing concurrent $ww$ conflicts.
Thus, the version chain is the same under either begin- or commit-based reasoning,
and only the exclusion checks differ: one may evaluate $\pi$ and $\xi$ according to either KTO.

This raises the possibility of a \emph{dual KTO} approach.
At commit time, a transaction may commit as long as at least one of the exclusion conditions
(begin- or commit-based) does not trigger, effectively combining the advantages of both orders.
Even a naive deployment can exploit this flexibility by switching the KTO policy
per workload group according to observed characteristics.

These dual structures---for both VF and KTO---suggest a broader design space where
ESSN-style safety checks can coexist with adaptive or hybrid policies,
potentially bridging static and runtime serializability guarantees.

\section{Conclusion}

\textbf{ESSN} is a lightweight, invariant-driven framework for serializable MVCC that (i) factors correctness
into $(\mathrm{VF},\mathrm{VO},\mathrm{KTO})$ with explicit \emph{alignment}, and (ii) uses a single commit-time
exclusion test $\pi(t)\!\le\!\xi(t)$ that is sound for \emph{any} declared VO/KTO. The protocol is previous-edge-only
and achieves SSN-like cost---$\mathcal{O}(|R|{+}|W|)$, no chain scans, no early aborts---while \emph{strictly subsuming} SSN.

We further identified a localized \emph{anti-pivot} structure (\S\ref{subsec:anti-pivot-structure}) that often triggers SSN's
exclusion but is admitted by ESSN via forward-side $\pi$-propagation, irrespective of whether KTO is commit- or begin-ordered
(\S\ref{subsec:kto-general}). In mixed workloads under commit-ordered KTO, experiments (\S\ref{sec:experiments}) show
up to $\sim\!50\%$ relative reduction in long-transaction aborts (e.g., with \texttt{snapshot\_at\_begin}). 
When a long can be declared and scheduled with priority, begin-ordered KTO becomes a practical alternative; 
combined with dual-VF
(\S\ref{subsec:dual-vf}) and commit-stall, it maintains serializability while aligning batch completion.

Overall, ESSN is a drop-in, practical safety layer for MVCC engines across short-only, long-only, and mixed workloads.

\vspace{0.5em}
\subsection*{Summary of Contributions}
\begin{itemize}
  \item \textbf{Criterion and exclusion test.} Factor serializability into $(\mathrm{VF},\mathrm{VO},\mathrm{KTO})$,
  require VF/VO--KTO alignment, and provide an MVSG-based commit-time exclusion test; prove soundness and that
  ESSN strictly subsumes SSN.
  \item \textbf{Invariant-driven algorithm at SSN-like cost.} Present a DSG-based, previous-edge-only protocol with a single
  commit-time check that achieves $\mathcal{O}(|R|{+}|W|)$ work and no early aborts, matching SSN's per-version footprint.
  \item \textbf{Long/Long-RW as first-class targets.} Formalize interval-containment long transactions and show
  higher acceptance than SSN, with clear gains on Long-RW mixes.
\end{itemize}

\section{Acknowledgments}
This paper is based on results obtained from a project, JPNP16007, subsidized by the New Energy 
and Industrial Technology Development Organization (NEDO).

\appendix
\section{Proofs for \S\ref{sec:tictoc-vs-mvcc}}
\label{app:vsr-proofs}

\paragraph{Model assumptions (TicToc; single-version committed read).}
At the commit of a transaction $t$, TicToc must choose a commit timestamp $C_t$ that
simultaneously satisfies the following invariant\footnote{As usual, for reads the
implementation may raise $v.rts$ up to $C_t$ at $t$'s commit.}:

\begin{equation} 
  \begin{split} 
    \exists\,C_t:\quad
    \underbrace{\bigl(\forall v\in \{\text{versions read by }t\}\bigr)\; v.wts \le C_t \le v.rts}_{\text{(I\_1) read bounds}}\\
    \wedge\;\;
    \underbrace{\bigl(\forall v\in \{\text{versions overwritten by }t\}\bigr)\; v.rts < C_t}_{\text{(I\_2) write lower bound}} . 
  \end{split}
  \label{eq:tictoc-invariant}
\end{equation}

We write $x_{\mathrm{old}}$ for the version of key $x$ that existed when $t$ first read $x$.
In the schedules we consider there is at most one overwriter, so it suffices to distinguish
$x$ and $x_{\mathrm{old}}$.

\begin{lemma}[Single-key write-after-read is not allowed]
Consider the single-version schedule
\[
r_1(x)\;\; w_2(x)\; c_2\;\; w_1(x)\; c_1 .
\]
Then $t_1$ cannot commit under \eqref{eq:tictoc-invariant}.
\end{lemma}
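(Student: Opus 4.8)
The plan is to derive a contradiction directly from the TicToc commit invariant~\eqref{eq:tictoc-invariant} by tracking how the read and write timestamps on key $x$ evolve across the two commits. The key observation is that $t_1$ both reads and later writes $x$, so it must satisfy the read-bound constraint (I$_1$) against the version $x_{\mathrm{old}}$ it read, \emph{and} the write-lower-bound constraint (I$_2$) against whatever version it overwrites---but in between, $t_2$ has committed a new version of $x$ and bumped the relevant timestamps.

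First I would fix notation: let $x_{\mathrm{old}}$ be the version $t_1$ reads at $r_1(x)$, and track $x_{\mathrm{old}}.wts$ and $x_{\mathrm{old}}.rts$. When $t_1$ reads $x_{\mathrm{old}}$, constraint (I$_1$) forces its eventual commit timestamp to satisfy $x_{\mathrm{old}}.wts \le C_{t_1} \le x_{\mathrm{old}}.rts$. Next I would analyze $t_2$'s commit: since $w_2(x)$ overwrites $x_{\mathrm{old}}$, constraint (I$_2$) gives $x_{\mathrm{old}}.rts < C_{t_2}$, and per the footnote, $t_1$'s earlier read raised $x_{\mathrm{old}}.rts$ up to (at least) $C_{t_1}$. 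The crux is that $t_2$ commits \emph{between} $t_1$'s read and $t_1$'s commit, so when $t_1$ finally commits and must write $x$, the version it overwrites is now $x_2$ (the version $t_2$ installed), whose $wts = C_{t_2}$. Applying (I$_2$) to $t_1$'s own write then requires $C_{t_2} \le x_2.rts < C_{t_1}$, i.e.\ $C_{t_2} < C_{t_1}$.

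Then I would close the loop: from $t_1$'s read bound we have $C_{t_1} \le x_{\mathrm{old}}.rts$, but $t_2$'s write lower bound already forced $x_{\mathrm{old}}.rts < C_{t_2}$, giving $C_{t_1} < C_{t_2}$. Combined with $C_{t_2} < C_{t_1}$ from the previous step, this is the contradiction $C_{t_1} < C_{t_2} < C_{t_1}$, so no legal $C_{t_1}$ exists and $t_1$ cannot commit. The whole argument is really a statement that the rw-antidependency $t_1 \to t_2$ (reader before the overwriter) and the ww/wr forcing $t_2 \to t_1$ (because $t_1$ overwrites $t_2$'s version) together demand a cyclic timestamp order, which the single $C_{t_1}$ cannot accommodate.

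The main obstacle I anticipate is bookkeeping the exact moment each timestamp is read or raised, since the footnote allows $v.rts$ to be lifted at commit; I must make sure I apply (I$_1$) and (I$_2$) to the \emph{correct} version ($x_{\mathrm{old}}$ for $t_1$'s read versus $x_2$ for $t_1$'s write) and not conflate them. A subtle point worth stating explicitly is that because $c_2 \prec w_1(x)$ in the schedule, by the time $t_1$ performs its write the chain head is $x_2$, so $t_1$'s write necessarily overwrites $x_2$ rather than $x_{\mathrm{old}}$; this is what injects $C_{t_2}$ into $t_1$'s write lower bound and makes the two inequalities collide. Once that version-targeting is pinned down, the contradiction is immediate.
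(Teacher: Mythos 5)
Your proposal is correct and follows essentially the same route as the paper: it derives $C_1 < C_2$ from $t_1$'s read bound on $x_{\mathrm{old}}$ together with $t_2$'s write lower bound, and $C_2 < C_1$ from $t_1$'s overwrite of the version installed by $t_2$ (using $rts(x_2)\ge wts(x_2)=C_2$, which is the same fact the paper states as $rts(x)\ge C_2$ after $c_2$). Your explicit care about which version each constraint targets ($x_{\mathrm{old}}$ for the read, $x_2$ for the write) matches the paper's bookkeeping exactly.
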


\begin{proof}
From $r_1(x)$ and (I\_1) we obtain
\begin{equation}
C_1 \;\le\; rts(x_{\mathrm{old}}). \tag{A}
\end{equation}
From $w_2(x)$ and (I\_2) at $t_2$’s commit,
\begin{equation}
C_2 \;>\; rts(x_{\mathrm{old}}) \;\;\Rightarrow\;\; C_1 < C_2 \quad \text{by (A).} \tag{B}
\end{equation}
From $w_1(x)$ and (I\_2) at $t_1$’s commit we require $C_1 > rts(x)$; since after $c_2$ we have
$rts(x)\!\ge\!C_2$,
\begin{equation}
C_1 \;>\; C_2. \tag{C}
\end{equation}
(B) and (C) contradict. Hence $t_1$ cannot satisfy \eqref{eq:tictoc-invariant} and therefore cannot commit.
\end{proof}

\begin{lemma}[Two-key read skew is not allowed]
Consider the single-version schedule
\[
r_1(x)\;\; w_2(x)\; w_2(y)\; c_2\;\; r_1(y)\; c_1 .
\]
Then $t_1$ cannot commit under \eqref{eq:tictoc-invariant}.
\end{lemma}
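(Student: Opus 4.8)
The plan is to mirror the single-key argument from the preceding lemma, but now exploit the fact that the read skew spans two keys $x$ and $y$, so the contradiction arises from chaining the read bound on $x$ against the read bound on $y$ through the common commit timestamp $C_2$ of the overwriter $t_2$. First I would record, from $r_1(x)$ and invariant (I\_1), that $t_1$'s commit timestamp must satisfy $C_1 \le rts(x_{\mathrm{old}})$, where $x_{\mathrm{old}}$ is the version present when $t_1$ first read $x$. Since $t_2$ overwrites $x$, invariant (I\_2) at $t_2$'s commit forces $C_2 > rts(x_{\mathrm{old}})$, and combining these gives $C_1 < C_2$.

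Next I would turn to the read of $y$. Here the crucial point is that $r_1(y)$ occurs \emph{after} $c_2$, so under the single-version committed-read semantics $t_1$ must read the version of $y$ written by $t_2$, not the base version. Applying (I\_1) to that read version yields $y.wts \le C_1$, and since $y$ was written by $t_2$ we have $y.wts = C_2$, giving $C_2 \le C_1$. This directly contradicts $C_1 < C_2$ established above, so no valid $C_1$ exists and $t_1$ cannot commit.

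The step I expect to be the main obstacle is justifying precisely that $r_1(y)$ reads $t_2$'s version rather than an earlier one, i.e.\ pinning down $y.wts = C_2$ at the moment $t_1$ validates. In the single-version model there is only one live version per key, so once $c_2$ installs $w_2(y)$ the only readable version of $y$ is $t_2$'s; this is exactly where TicToc's lack of multiversion visibility bites, and it is worth stating explicitly (as the paper's model assumptions already hint with ``at most one overwriter''). I would phrase this as: because the schedule places $c_2 \prec r_1(y)$ and there is a single version chain, the version observed by $r_1(y)$ is the one with write timestamp $C_2$.

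The argument is then a short tagged derivation: label $C_1 \le rts(x_{\mathrm{old}})$ as (A), $C_2 > rts(x_{\mathrm{old}})$ (hence $C_1 < C_2$) as (B), and $C_2 = y.wts \le C_1$ as (C), observing that (B) and (C) are mutually contradictory. This parallels the structure of the single-key proof, with the write lower bound on $x$ supplying one inequality and the read bound on the overwritten $y$ supplying the reverse, so the two-key skew collapses to the same impossibility without needing any additional machinery beyond invariant~\eqref{eq:tictoc-invariant}.
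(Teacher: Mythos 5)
Your proposal is correct and follows essentially the same argument as the paper's proof: bound $C_1$ above by $rts(x_{\mathrm{old}})$ via (I\_1), derive $C_1 < C_2$ from (I\_2) at $t_2$'s commit, and then obtain the reverse inequality $C_2 \le C_1$ from the read of $t_2$'s version of $y$ under (I\_1). Your explicit justification that $r_1(y)$ must observe $t_2$'s version under single-version committed-read semantics is exactly the point the paper states more tersely, so there is no gap.
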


\begin{proof}
From the early read $r_1(x)$ and (I\_1) we have
\begin{equation}
C_1 \;\le\; rts(x_{\mathrm{old}}). \tag{A}
\end{equation}
From $w_2(x)$ and (I\_2) at $t_2$’s commit,
\begin{equation}
C_2 \;>\; rts(x_{\mathrm{old}}) \;\;\Rightarrow\;\; C_1 < C_2 \quad \text{by (A).} \tag{B}
\end{equation}
Since $t_2$ overwrites $y$ before $r_1(y)$, the read of $y$ observes the version whose
write timestamp equals $C_2$; hence by (I\_1),
\begin{equation}
C_1 \;\ge\; wts(y) \;=\; C_2. \tag{C}
\end{equation}
(B) and (C) contradict. Therefore $t_1$ cannot satisfy \eqref{eq:tictoc-invariant} and cannot commit.
\end{proof}

\begin{lemma}[(a) is TicToc-feasible iff a bound interval is nonempty]
For
\[
\text{(a)}\quad r_1(x)\;\; w_2(x)\; c_2\;\; w_3(y)\; c_3\;\; r_1(y)\; c_1,
\]
there exists a feasible $C_1$ under \eqref{eq:tictoc-invariant} iff
\[
C_3 \;\le\; C_1 \;<\; C_2 \quad(\text{i.e., the interval }[C_3,\,C_2)\text{ is nonempty}).
\]
\end{lemma}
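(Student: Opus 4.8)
The plan is to prove an "iff" by analyzing the constraints that invariant \eqref{eq:tictoc-invariant} imposes on the single unknown $C_1$, treating $C_2$ and $C_3$ as already-fixed commit timestamps of $t_2$ and $t_3$. I would first collect every inequality that $C_1$ must satisfy. From the early read $r_1(x)$ and (I\_1) we get $C_1 \le rts(x_{\mathrm{old}})$, and since $t_2$ overwrites $x$ we have $C_2 > rts(x_{\mathrm{old}})$ by (I\_2), giving the upper bound $C_1 < C_2$. For the second key, the read $r_1(y)$ observes the version written by $t_3$ (since $w_3(y)\,c_3$ precedes $r_1(y)$), so by (I\_1) we obtain $wts(y) = C_3 \le C_1$, i.e.\ the lower bound $C_3 \le C_1$. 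These two are the only genuine constraints on $C_1$, because $t_1$ performs no writes here, so (I\_2) contributes nothing for $t_1$ itself.

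With the bounds $C_3 \le C_1 < C_2$ in hand, the equivalence follows in two directions. For the forward direction (feasibility implies nonemptiness), any $C_1$ satisfying \eqref{eq:tictoc-invariant} must satisfy both derived inequalities simultaneously, so $C_3 \le C_1 < C_2$ witnesses that $[C_3, C_2)$ is nonempty. For the converse (nonemptiness implies feasibility), I would exhibit a concrete choice: if $[C_3, C_2)$ is nonempty, pick $C_1 = C_3$ (or any value in the interval) and verify that all clauses of \eqref{eq:tictoc-invariant} hold---the read bounds for both $x$ and $y$ are met by construction, and there is no write clause to check since $t_1$ writes nothing. This shows the interval condition is not merely necessary but sufficient.

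The main subtlety---and the step I would be most careful about---is justifying precisely which read-timestamp bounds apply to $C_1$, in particular the claim $C_1 \le rts(x_{\mathrm{old}})$ versus the possibility that $t_2$'s commit has raised $rts(x_{\mathrm{old}})$ in a way that loosens the bound. I would lean on the model assumption (the footnote in \eqref{eq:tictoc-invariant}) that reads may raise $v.rts$ up to $C_t$ only at the reader's own commit, so $t_2$'s overwrite of $x$ does not relax $t_1$'s read bound on $x_{\mathrm{old}}$; the clean separation of $x$ and $x_{\mathrm{old}}$ assumed in the model (``at most one overwriter'') is what makes this bookkeeping unambiguous. Once that point is pinned down, the rest is a direct transcription of the invariant into two inequalities and a one-line witness, exactly paralleling the contradiction arguments in the two preceding lemmas---except that here the two bounds are compatible rather than conflicting, which is precisely why schedule (a) is admissible where single-key write-after-read and two-key read skew are not.
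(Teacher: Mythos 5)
Your proposal follows essentially the same route as the paper's proof: it derives the identical three bounds ($C_1 \le rts(x_{\mathrm{old}})$ from the early read, $C_1 < C_2$ via $t_2$'s overwrite, and $C_3 \le C_1$ from the late read of $y$) and concludes the interval condition. Your added care on the sufficiency direction and on why $t_2$'s commit does not loosen the $x_{\mathrm{old}}$ bound is more explicit than the paper's terse statement but does not change the argument.
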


\begin{proof}
From the early read $r_1(x)$ and (I\_1),
\begin{equation}
C_1 \;\le\; rts(x_{\mathrm{old}}). \tag{A}
\end{equation}
From $w_2(x)$ and (I\_2),
\begin{equation}
C_2 \;>\; rts(x_{\mathrm{old}}) \;\Rightarrow\; C_1 < C_2 \quad \text{by (A).} \tag{B}
\end{equation}
From the late read $r_1(y)$ after $w_3(y),c_3$ and (I\_1),
\begin{equation}
C_1 \;\ge\; wts(y) \;=\; C_3. \tag{C}
\end{equation}
Thus a feasible $C_1$ exists iff \(C_3 \le C_1 < C_2\).
\end{proof}

\begin{lemma}[(b) is TicToc-feasible iff a bound interval is nonempty]
For
\[
\text{(b)}\quad r_1(y)\;\; w_2(x)\; c_2\;\; w_3(y)\; c_3\;\; r_1(x)\; c_1,
\]
there exists a feasible $C_1$ under \eqref{eq:tictoc-invariant} iff
\[
C_2 \;\le\; C_1 \;<\; C_3 \quad(\text{i.e., the interval }[C_2,\,C_3)\text{ is nonempty}).
\]
\end{lemma}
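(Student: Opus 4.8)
The plan is to mirror the derivation used for schedule (a), since (b) is its mirror image under the exchange of the two keys: here the \emph{early} read is $r_1(y)$ and the \emph{late} read is $r_1(x)$, while $t_3$ (not $t_2$) is the transaction that overwrites the early-read key and $t_2$ (not $t_3$) is the one whose committed write supplies the late-read version. I therefore expect the feasible window to come out as $[C_2, C_3)$ rather than $[C_3, C_2)$, with $t_2$ furnishing the lower bound and $t_3$ furnishing the strict upper bound. First I would fix, for each of $t_1$'s two reads, exactly which committed version it observes on the single-version timeline: the early $r_1(y)$ sees the initial version $y_{\mathrm{old}}$ (it precedes $w_3(y)$), and the late $r_1(x)$ sees the version installed by $w_2(x)$ (already committed at $c_2$), whose write timestamp is $C_2$.

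With the version function pinned down, I would derive three inequalities in analogy with those labeled (A)--(C) in the preceding lemmas. From $r_1(y)$ together with the read side of \eqref{eq:tictoc-invariant} (that is, $v.wts \le C_1 \le v.rts$) I get $C_1 \le rts(y_{\mathrm{old}})$. Next, because $w_3(y)$ overwrites that version, the write lower bound of \eqref{eq:tictoc-invariant} applied at $t_3$'s commit forces $C_3 > rts(y_{\mathrm{old}})$, which caps $rts(y_{\mathrm{old}})$ strictly below $C_3$ and hence yields $C_1 < C_3$. Finally, from the late read $r_1(x)$ of the version with $wts = C_2$, the read bound gives $C_1 \ge C_2$. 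Combining the two, any $C_1$ satisfying \eqref{eq:tictoc-invariant} must lie in $[C_2, C_3)$, which establishes necessity.

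For the converse I would assume $C_2 < C_3$, pick any $C_1 \in [C_2, C_3)$, and verify that every clause of \eqref{eq:tictoc-invariant} can be met: $t_1$ performs no overwrite, so the write lower bound is vacuous, and each read is validated by raising the corresponding $rts$ up to $C_1$ (permitted by the footnote), which is consistent precisely because $C_1 < C_3$ leaves $rts(y_{\mathrm{old}})$ below $C_3$ and no later overwriter of the new $x$-version exists. The main obstacle I anticipate is not the algebra but the first step---correctly identifying the observed versions and, in particular, arguing that $t_3$'s overwrite \emph{permanently} caps $rts(y_{\mathrm{old}})$ below $C_3$. This ``freeze'' is what converts the otherwise flexible read upper bound into the genuine strict constraint $C_1 < C_3$, and it is the only point where single-version committed-read semantics, rather than pure timestamp bookkeeping, does real work.
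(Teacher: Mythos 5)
Your proof is correct and takes essentially the same route as the paper, which simply invokes the proof of (a) with $(x, t_2)$ swapped for $(y, t_3)$; your three inequalities are exactly the (A)--(C) of that argument under the swap. You additionally spell out the sufficiency direction and the role of the $rts$ cap from $t_3$'s overwrite, which the paper leaves implicit but which is consistent with its reasoning.
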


\begin{proof}
Analogous to the proof of (a), swapping $(x,\;t_2)$ with $(y,\;t_3)$.
\end{proof}

\begin{proposition}[Mutual incompatibility of (a) and (b)]
There is no initialization under which both (a) and (b) are admissible simultaneously.
\end{proposition}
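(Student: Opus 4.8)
The plan is to reduce the claim to the two preceding feasibility characterizations and then observe that they impose opposite strict orderings on one \emph{shared} pair of commit timestamps. The crucial structural fact is that across both schedules the writer of $x$ is the same transaction $t_2$ (committing with timestamp $C_2$) and the writer of $y$ is the same transaction $t_3$ (committing with timestamp $C_3$); the only difference between (a) and (b) is the order in which the long reader $t_1$ issues its two reads. An \emph{initialization} therefore fixes the surrounding state---the base-version read stamps $rts(x_{\mathrm{old}})$ and $rts(y_{\mathrm{old}})$, and through $\eqref{eq:tictoc-invariant}$ the committed writer timestamps $C_2,C_3$---leaving $t_1$'s read order as the only remaining degree of freedom.

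First I would invoke the feasibility lemma for (a) to rewrite admissibility of (a) as nonemptiness of the interval $[C_3,C_2)$, i.e. $C_3 < C_2$, and the feasibility lemma for (b) to rewrite admissibility of (b) as nonemptiness of $[C_2,C_3)$, i.e. $C_2 < C_3$. The step that must be stated explicitly is that $C_2$ and $C_3$ here denote the \emph{same} two scalars under a common initialization: they are the already-finalized commit timestamps of the shared writers $t_2,t_3$, not free parameters re-chosen per schedule.

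With that identification in place the conclusion is immediate. For any fixed real pair $(C_2,C_3)$ the trichotomy gives either $C_2 < C_3$, or $C_2 = C_3$, or $C_3 < C_2$. In the first case only (b) is admissible, in the third only (a); in the boundary case $C_2 = C_3$ both half-open intervals collapse to $\varnothing$, so neither schedule is admissible. In no case are both admissible, which is exactly the proposition.

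The main obstacle is definitional rather than computational: one must justify that ``initialization'' legitimately pins $C_2$ and $C_3$ to a single shared value across the two read orders, instead of letting TicToc reorder them independently. I would close this gap by noting that in both (a) and (b) the events $c_2$ and $c_3$ precede $c_1$, so the writers' timestamps are frozen by the time $t_1$ is validated; the comparison is thus genuinely between two candidate read orders for $t_1$ against \emph{one} fixed environment---precisely the mixed-workload, random-read-order situation that motivates the TicToc-versus-MVSR gap in \S\ref{sec:tictoc-vs-mvcc}.
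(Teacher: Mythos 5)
Your proof is correct and follows the same route as the paper: invoke the two feasibility lemmas to reduce admissibility of (a) and (b) to the incompatible strict inequalities $C_3 < C_2$ and $C_2 < C_3$, which cannot both hold. The extra care you take in justifying that $C_2$ and $C_3$ are the same fixed scalars across both schedules is a reasonable elaboration of what the paper leaves implicit, but the argument is essentially identical.
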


\begin{proof}
By the lemma for (a), admissibility of (a) requires \(C_3 < C_2\).
By the lemma for (b), admissibility of (b) requires \(C_2 < C_3\).
These strict inequalities cannot both hold.
\end{proof}

\paragraph{Remark (VSR via serial order).}
A schedule is VSR iff there exists a total serial order $<_s$ in which every read observes
the last preceding write on the same key. For (a), $r_1(x)$ requires $t_1 <_s t_2$ while
$r_1(y)$ requires $t_3 <_s t_1$, yielding the unique order $t_3 <_s t_1 <_s t_2$.
For (b), $r_1(y)$ requires $t_1 <_s t_3$ while $r_1(x)$ requires $t_2 <_s t_1$, yielding
$t_2 <_s t_1 <_s t_3$. These required orders are incompatible, matching the
mutually exclusive interval conditions above.

\bibliographystyle{unsrt} 
\bibliography{essn_refs}

@article{bernstein1983-mvcc,
  author  = {Philip A. Bernstein and Nathan Goodman},
  title   = {Multiversion Concurrency Control---Theory and Algorithms},
  journal = {ACM Transactions on Database Systems},
  year    = {1983},
  volume  = {8},
  number  = {4},
  pages   = {465--483}
}

@article{wang2017-ems,
  author  = {Tianzheng Wang and Ryan Johnson and Alan Fekete and Ippokratis Pandis},
  title   = {Efficiently Making (Almost) Any Concurrency Control Mechanism Serializable},
  journal = {The VLDB Journal},
  year    = {2017},
  volume  = {26},
  number = {4},
  pages   = {537--562}
}

@book{weikum2001-tis,
  author    = {Gerhard Weikum and Gottfried Vossen},
  title     = {Transactional Information Systems: Theory, Algorithms, and the Practice of Concurrency Control and Recovery},
  publisher = {Morgan Kaufmann},
  year      = {2001}
}

@article{corbett2013-spanner-tocs,
  author  = {James C. Corbett and Jeffrey Dean and Michael Epstein and Andrew Fikes and Christopher Frost and Jeffrey J. Furman and Sanjay Ghemawat and Andrey Gubarev and Christopher Heiser and Peter Hochschild and others},
  title   = {{Spanner}: Google's Globally Distributed Database},
  journal = {ACM Transactions on Computer Systems},
  year    = {2013},
  volume  = {31},
  number  = {3},
  articleno = {8},
  pages   = {8:1--8:22}
}

@inproceedings{armenatzoglou2022-redshift,
  author    = {Nikos Armenatzoglou and Sanuj Basu and Naga Bhanoori and Mengchu Cai and Naresh Chainani and Kiran Chinta and Venkatraman Govindaraju and Todd J. Green and Monish Gupta and Sebastian Hillig and Eric Hotinger and Yan Leshinksy and Jintian Liang and Michael McCreedy and Fabian Nagel and Ippokratis Pandis and Panos Parchas and Rahul Pathak and Orestis Polychroniou and Foyzur Rahman and Gaurav Saxena and Gokul Soundararajan and Sriram Subramanian and Doug Terry},
  title     = {Amazon Redshift Re-invented},
  booktitle = {Proceedings of the 2022 International Conference on Management of Data (SIGMOD '22)},
  year      = {2022},
  pages     = {2205--2217},
  publisher = {ACM}
}

@inproceedings{tan2020-cobra,
  author    = {Cheng Tan and Changgeng Zhao and Shuai Mu and Michael Walfish},
  title     = {{COBRA}: Making Transactional Key-Value Stores Verifiably Serializable},
  booktitle = {Proceedings of the 14th USENIX Symposium on Operating Systems Design and Implementation (OSDI '20)},
  year      = {2020},
  publisher = {USENIX Association},
  pages     = {63--80}
}

@article{fekete2005-ssi,
  author  = {Alan Fekete and Dimitrios Liarokapis and Elizabeth O'Neil and Patrick O'Neil and Dennis Shasha},
  title   = {Making Snapshot Isolation Serializable},
  journal = {ACM Transactions on Database Systems},
  year    = {2005},
  volume  = {30},
  number  = {2},
  pages   = {492--528}
}

@inproceedings{yu2016-tictoc,
  author    = {Xiangyao Yu and Andrew Pavlo and Daniel Sanchez and Srinivas Devadas},
  title     = {TicToc: Time Traveling Optimistic Concurrency Control},
  booktitle = {Proceedings of the 2016 International Conference on Management of Data (SIGMOD '16)},
  year      = {2016},
  pages     = {1629--1642},
  publisher = {ACM}
}

@misc{kambayashi2023-tsurugi,
  author = {Takashi Kambayashi and others},
  title  = {Next-Generation High-Speed Open Source RDB: Tsurugi (in Japanese)},
  year   = {2023},
  note   = {Nikkei BP (in Japanese)}
}

@article{kambayashi2023-shirakami,
  author  = {Takashi Kambayashi and Takayuki Tanabe and Takashi Hoshino and Hideyuki Kawashima},
  title   = {Shirakami: A Hybrid Concurrency Control Protocol for the Tsurugi Relational Database System},
  journal = {arXiv},
  year    = {2023},
  eprint  = {2303.18142},
  archivePrefix = {arXiv},
  primaryClass  = {cs.DB}
}

@article{kung1981-occ,
  author  = {H. T. Kung and John T. Robinson},
  title   = {On Optimistic Methods for Concurrency Control},
  journal = {ACM Transactions on Database Systems},
  year    = {1981},
  volume  = {6},
  number  = {2},
  pages   = {213--226}
}

@phdthesis{adya1999-weakconsistency,
  author = {Atul Adya},
  title  = {Weak Consistency: A Generalized Theory and Optimistic Implementations for Distributed Transactions},
  school = {Massachusetts Institute of Technology},
  year   = {1999}
}

@inproceedings{berenson1995-critique,
  author    = {Hal Berenson and Philip Bernstein and Jim Gray and Jim Melton and Elizabeth O'Neil and Patrick O'Neil},
  title     = {A Critique of ANSI SQL Isolation Levels},
  booktitle = {Proceedings of SIGMOD},
  year      = {1995},
  pages     = {1--10}
}

@inproceedings{lim2017-cicada,
  author    = {Hyeontaek Lim and Michael Kaminsky and David G. Andersen},
  title     = {Cicada: Dependably Fast Multi-core In-memory Transactions},
  booktitle = {Proceedings of SIGMOD},
  year      = {2017},
  pages     = {21--35}
}

@article{tanabe2020-ccbench,
  author  = {Takayuki Tanabe and Takashi Hoshino and Hideyuki Kawashima and Osamu Tatebe},
  title   = {An Analysis of Concurrency Control Protocols for In-memory Databases with CCBench},
  journal = {Proceedings of the VLDB Endowment},
  year    = {2020},
  volume  = {13},
  number  = {13},
  pages   = {3531--3544}
}

@article{song2025-k2,
  author  = {Haoze Song and Yongqi Wang and Xusheng Chen and Hao Feng and Yazhi Feng and Xieyun Fang and Heming Cui and Linghe Kong},
  title   = {K2: On Optimizing Distributed Transactions in a Multi-region Data Store with TrueTime Clocks (Extended Version)},
  journal = {arXiv},
  year    = {2025},
  eprint  = {2504.01460},
  archivePrefix = {arXiv},
  primaryClass  = {cs.DB}
}

@article{cahill2009-ssi-tods,
  author  = {Michael J. Cahill and Uwe R{\"o}hm and Alan D. Fekete},
  title   = {Serializable Isolation for Snapshot Databases},
  journal = {ACM Transactions on Database Systems},
  year    = {2009},
  volume  = {34},
  number  = {4},
  articleno = {20},
  pages   = {20:1--20:42}
}

@inproceedings{tu2013-silo,
  author    = {Stephen Tu and Wenting Zheng and Eddie Kohler and Barbara Liskov and Samuel Madden},
  title     = {Speedy Transactions in Multicore In-Memory Databases},
  booktitle = {Proceedings of the 24th ACM Symposium on Operating Systems Principles (SOSP '13)},
  pages     = {18--32},
  year      = {2013}
}

@phdthesis{revilak2011-pssi,
  author = {Stephen Anthony Revilak},
  title  = {Precisely Serializable Snapshot Isolation},
  school = {University of Massachusetts Boston},
  address = {Boston, MA},
  year   = {2011},
  month  = dec,
  url    = {https://scholarworks.umb.edu/doctoral_dissertations/54}
}

@article{ports2012-ssi,
  author  = {Dan R. K. Ports and Kevin Grittner},
  title   = {Serializable Snapshot Isolation in {PostgreSQL}},
  journal = {Proceedings of the VLDB Endowment},
  volume  = {5},
  number  = {12},
  pages   = {1850--1861},
  year    = {2012}
}

@article{shioi2024-rss,
  author  = {Takamitsu Shioi and Takashi Kambayashi and Suguru Arakawa and Ryoji Kurosawa and Satoshi Hikida and Haruo Yokota},
  title   = {Read-safe snapshots: An abort/wait-free serializable read method for read-only transactions on mixed OLTP/OLAP workloads},
  journal = {Information Systems},
  volume  = {124},
  pages   = {102385},
  year    = {2024},
  issn    = {0306-4379},
  doi     = {10.1016/j.is.2024.102385}
}

@article{pan2025-ccaalf,
  author  = {Hexiang Pan and Shaofeng Cai and Tien Tuan Anh Dinh and Yuncheng Wu and Yeow Meng Chee and Gang Chen and Beng Chin Ooi},
  title   = {{CCaaLF}: Concurrency Control as a Learnable Function},
  journal = {CoRR},
  volume  = {abs/2503.10036},
  year    = {2025},
  doi     = {10.48550/arXiv.2503.10036}
}

@article{wu2017-mvcc-eval,
  author = {Wu, Yingjun and Arulraj, Joy and Lin, Jiexi and Xian, Ran and Pavlo, Andrew},
  title = {An Empirical Evaluation of In-Memory Multi-Version Concurrency Control},
  booktitle = {PVLDB},
  year = {2017},
  journal = {Proceedings of the VLDB Endowment},
  volume = {10},
  number = {7},
  pages = {781--792}
}

@article{Vandevoort2025-mvsplit,
  author    = {Brecht Vandevoort and Bas Ketsman and Frank Neven},
  title     = {Allocating Isolation Levels to Transactions in a Multiversion Setting},
  journal   = {ACM Transactions on Database Systems},
  year      = {2025},
  volume    = {50},
  number    = {2},
  pages     = {1--24},
  doi       = {10.1145/3716374},
}

@article{cheng2024-towards,
  author = {Audrey Cheng and Aaron Kabcenell and Jason Chan and Xiao Shi and Peter Bailis and Natacha Crooks and Ion Stoica},
  title  = {Towards Optimal Transaction Scheduling},
  journal= {Proceedings of the VLDB Endowment},
  year   = {2024},
  volume = {17},
  number = {11},
  pages  = {2694--2707},
  doi    = {10.14778/3681954.3681956}
}

@article{nemoto2025-oze,
  author       = {Jun Nemoto and Takashi Kambayashi and Takashi Hoshino and Hideyuki Kawashima},
  title        = {Oze: Decentralized Graph-based Concurrency Control for Long-running Update Transactions},
  journal      = {Proceedings of the VLDB Endowment},
  volume       = {18},
  number       = {8},
  pages        = {2321--2333},
  year         = {2025},
  doi          = {10.14778/3742728.3742730}
}

@article{hong2025-hdcc,
  author    = {Yinhao Hong and Hongyao Zhao and Wei Lu and Xiaoyong Du and Yuxing Chen and Anqun Pan and Lixiong Zheng},
  title     = {A Hybrid Approach to Integrating Deterministic and Non-Deterministic Concurrency Control in Database Systems},
  journal   = {Proceedings of the VLDB Endowment},
  year      = {2025},
  volume    = {18},
  number    = {5},
  pages     = {1376--1389},
  doi       = {10.14778/3718057.3718066}
}

\end{document}